\newcommand\be{\begin{equation}}
\newcommand\ee{\end{equation}}
\journalname{Statistics and Computing}
\begin{document}

\title{A critical analysis of resampling strategies for the \\ regularized particle filter\thanks{pierre.carmier@centralesupelec.fr}}

\author{Pierre Carmier \and Olexiy Kyrgyzov \and Paul-Henry Courn\`ede}
\newcommand{\digiplante}{Biomathematics Team, MICS Lab, \\ CentraleSup\'elec, Universit\'e Paris-Saclay, France}
\institute{\digiplante}
\date{\today}

\maketitle

\begin{abstract}
We analyze the performance of different resampling strategies for the regularized particle filter regarding parameter estimation. We show in particular, building on analytical insight obtained in the linear Gaussian case, that resampling systematically can prevent the filtered density from converging towards the true posterior distribution. We discuss several means to overcome this limitation, including kernel bandwidth modulation, and provide evidence that the resulting particle filter clearly outperforms traditional bootstrap particle filters. Our results are supported by numerical simulations on a linear textbook example, the logistic map and a non-linear plant growth model. 
\keywords{Particle filtering \and Bayesian inference \and Hidden Markov model \and Kernel density estimation}
\end{abstract}

\section{Introduction}

Sequential Monte Carlo (SMC) methods \cite{DelMoral04,SMC} are powerful algorithms which allow performing Bayesian inference in dynamic systems. These methods, also known under the heading of particle filters in the context of dynamic systems, have become extremely popular way beyond the borders of the statistics community and are nowadays widely used by engineers and scientists from other fields faced with non-trivial inference tasks which naturally abound with the emergence of data science. The purpose of this paper is to provide a critical analysis of the performance of regularized particle filter (RPF) algorithms \cite{LeGland98,Musso12} which represent an interesting variation of traditional particle filters. They rely on kernel-based \cite{Parzen62} posterior density estimation at each filtering step, with particle resampling performed from the reconstructed density. In many situations, this method was proven to help prevent sample impoverishment \cite{SMC} and thereby enhance the robustness of particle filtering. Our main achievement will be to demonstrate in this paper that the approach which consists in resampling systematically can lead to seriously over-estimate the posterior mean squared error, motivating us to exhibit alternative resampling strategies which allow fully resolving this issue.

Before going any further, let us introduce the general framework we shall be working with in the rest of this paper. To begin with, we shall assume that the processes from which we seek to infer some information belong to the family of hidden Markov models (HMM) \cite{HMM}, which describe stochastic processes characterizing the evolution in discrete time $n$ of two sets of random variables $\{{\bf x}_n\}_{n\geq 0}$ and $\{{\bf y}_n\}_{n\geq 1}$, respectively defined on some space ${\cal X} \subseteq \mathbb{R}^{d_x}$ and ${\cal Y} \subseteq \mathbb{R}^{d_y}$. The former are hidden random variables referred to as states, while the latter are noisy observations of the former. Starting from a prior probability measure $\pi_0(d{\bf x}_0)$ on the initial state, hidden variables follow a Markovian evolution characterized by a sequence of Markov kernels $f_n(d{\bf x}_n | {\bf x}_{n-1})$ also named transition functions. Observations are considered conditionally independent and described by a conditional probability density function $g_n({\bf y}_n | {\bf x}_n)$, the so-called measurement function, with respect to the Lebesgue measure $\lambda$. Parameters $\{\Theta\}$, with a compact support set, can easily be accounted for in that framework by augmenting the size of the state space, as can the presence of environmental variables $\{{\bf u}_n\}_{n\geq 0}$ defined on some space ${\cal U} \subseteq \mathbb{R}^{d_u}$.

A problem of broad interest in this context is to determine the posterior probability measure $\pi_k(d{\bf x}_k | {\bf y}_{1:n})$ given a sequence of observations ${\bf y}_{1:n}=\{{\bf y}_1,...,{\bf y}_n\}$. If $k<n$, this is called a smoothing problem, while if $k>n$ this is a prediction problem. In the following we shall be interested in the situation where $k=n$, which is called the filtering problem. Thus, given any Borel set ${\cal A} \subset {\cal X}$, $\pi_n({\cal A}) = \int_{\cal A} \pi_n(d{\bf x}_n | {\bf y}_{1:n})$ shall be the posterior probability of event ${\cal A}$. More generally, for any test function $\phi$ defined on ${\cal X}$, applying Bayes rule recursively allows expressing the integral of $\phi$ with respect to the posterior probability measure as a function of past measurement and transition functions \cite{HMM}
\be
\begin{split}
& \int_{\cal X} \phi({\bf x}_n)\pi_n(d{\bf x}_n|{\bf y}_{1:n})
\\
& = \frac{1}{Z_n} \int_{{\cal X}^{n+1}} \phi({\bf x}_n) \pi_0(d{\bf x}_0) \prod_{j=1}^n g_j({\bf y}_j | {\bf x}_j) f_j(d{\bf x}_j | {\bf x}_{j-1}) \; ,
\end{split}
\ee
introducing the marginal likelihood
\be
Z_n = \int_{{\cal X}^{n+1}} \pi_0(d{\bf x}_0) \prod_{j=1}^n g_j({\bf y}_j | {\bf x}_j) f_j(d{\bf x}_j | {\bf x}_{j-1}) \; .
\ee
The purpose of SMC methods \cite{Liu98,DelMoral06,SMC} is to provide an estimation of such probability measures in the general case of nonlinear state-space models, where analytical attempts are essentially hopeless. The idea of such methods is to represent the probability distributions by particles, \textit{i.e.} independent identically distributed samples of the distributions, and to propagate and update them using the HMM equations, hence the equivalent nomenclature of particle filter methods. The estimation of parameters in this context has received acute attention from the community these last years (see \cite{Kantas15} for a review). One of the main challenges is to circumvent the celebrated particle degeneracy problem \cite{Gordon93} and to ensure that parameter space is sufficiently explored to enable accurate estimation. Several methods have been proposed to address this challenge, among which iterated filtering \cite{Ionides11} and particle Markov chain Monte Carlo \cite{Andrieu10} in the specific context of joint parameter / hidden state estimation. They constitute batch or offline methods, however, in that they are unable to take into account new observations dynamically, as opposed to online algorithms such as SMC$^2$ \cite{Chopin13,Jacob15}. 

A different idea, introduced some time ago in \cite{West93} and discussed in more detail in \cite{Liu10,Musso12}, is to allow freedom in parameter space by regularizing the posterior density using a mixture. In this paper, we focus on the RPF introduced in Ref.~\cite{LeGland98} and further developed and discussed in Refs~\cite{Musso12,Campillo09,ChenPhD,Gauchi13}. We shall highlight some shortcomings of this approach when resampling is performed systematically (as routinely done), which can be made explicit by computing the posterior mean squared error. In particular, building upon some well-known results obtained for linear Gaussian state-space models, we prove that the mean squared error is lower bounded by a positive constant when the number of particles is finite, independently of the number of observations. We shall provide a detailed understanding of this phenomenon on a very simple linear textbook example, offering evidence that it is related to the particle filter losing memory of past observations. We will then show that long-ranged memory can be restored by appropriately modulating the kernel bandwidth and that the efficiency of this suitably modified particle filter clearly outperforms that of the original bootstrap particle filter \cite{Gordon93}. The rest of the paper is organized as follows. In section \ref{sec:PF}, we recall basic principles of SMC methods and introduce the bootstrap particle filter and the RPF, before illustrating what can potentially go wrong when resampling is performed on a systematic basis. We then provide analytical insight regarding the effect of different resampling strategies on the behavior of the RPF in section \ref{sec:KF}, using a very simple linear textbook example for illustrative purposes. That our results qualitatively hold for more general, non-linear, models is discussed in section \ref{sec:NL}. Conclusions are drawn in section \ref{sec:Concl}, while a series of technical details are gathered in appendices.

\section{Sequential Monte Carlo methods}
\label{sec:PF}

Let us begin by recalling the basic principles of particle filter methods. Starting from the initial prior probability measure from which $N$ particles are sampled, ${\bf x}_0^{(i)} \sim \pi_0(d{\bf x}_0)$, particles are propagated until the next time step using a Markov kernel generally chosen \cite{Gordon93,Kong94} as ${\bf x}_{1|0}^{(i)} \sim f_1(d{\bf x}_1|{\bf x}_0^{(i)})$ in nonlinear state space models for which $f_1(d{\bf x}_1|{\bf x}_0^{(i)}, {\bf y}_1)$ is not readily available \cite{Liu98}. The resulting empirical estimator for the predicted probability measure can be expressed as 
\be
\pi_{1|0}^N(d{\bf x}) = \frac{1}{N}\sum_{i=1}^N \delta_{{\bf x}_{1|0}^{(i)}}(d{\bf x})
\ee
with $\delta$ the Dirac measure. The estimator for the updated posterior probability measure is obtained by weighting each particle according to the observation ${\bf y}_1$:
\be
\label{eq:multinomial}
\pi_1^N(d{\bf x}_1 | {\bf y}_1) = \sum_{i=1}^N w_1^{(i)} \delta_{{\bf x}_{1|0}^{(i)}}(d{\bf x}_1)
\ee
with weights
\be
w_1^{(i)} = \frac{g_1({\bf y}_1 | {\bf x}_{1|0}^{(i)})}{\sum_{j=1}^Ng_1({\bf y}_1 | {\bf x}_{1|0}^{(j)})}
\ee
suitably normalized to ensure that total weight is conserved. Given any bounded function $\phi$ defined on ${\cal X}$, its integral with respect to the empirical posterior probability measure is then given by 
\be
\int_{\cal X} \phi({\bf x}_1) \pi_1^N(d{\bf x}_1|{\bf y}_1) = \sum_{i=1}^N w_1^{(i)} \phi({\bf x}_{1|0}^{(i)}) \; ,
\ee
and the estimator can be shown to converge asymptotically, for an infinite number of particles, towards the true posterior probability measure given by Bayes rule \cite{Chopin04}:
\be
\lim_{N\to\infty} \int_{\cal X} \phi({\bf x}_1) \pi_1^N(d{\bf x}_1|{\bf y}_1) = \int_{\cal X} \phi({\bf x}_1) \pi_1(d{\bf x}_1|{\bf y}_1) \; .
\ee
Practically speaking, however, one always works with a finite number of particles and life is not as simple. A well-known problem which can occur with the above estimation procedure is the so-called weight degeneracy issue \cite{SMC}. If the likelihood is relatively peaked around the observation value, or in other words if the observation noise is rather small, particle weights will differ by orders of magnitude depending on the distance between the particle and the observation. Actually, this will happen almost regardless of the magnitude of the observation noise for a sufficient number of observations, since iterating the above procedure yields $w_n^{(i)} \propto \prod_{j=1}^n g_j({\bf y}_j|{\bf x}_{j|j-1}^{(i)})$, using the recursive formulation ${\bf x}_{j|j-1}^{(i)} \sim f_j(d{\bf x}_j|{\bf x}_{j-1}^{(i)})$. This implies that, eventually, most of the weight will be carried by a small number of particles, with a dramatic effect on the precision of the resulting estimation as can be seen by looking at the unbiased estimator of the marginal likelihood at step $n$:
\be
Z_n^N = \frac{1}{N}\sum_{i=1}^N \prod_{j=1}^n g_j({\bf y}_j|{\bf x}_{j|j-1}^{(i)}) \; .
\ee
Its variance can be estimated using the standard empirical estimator 
\be
\frac{1}{N^2}\sum_{i=1}^N \left(\prod_{j=1}^n g_j({\bf y}_j|{\bf x}_{j|j-1}^{(i)})-Z_n^N \right)^2 = \frac{(Z_n^N)^2}{\text{ESS}_n}\left(1 - \frac{\text{ESS}_n}{N}\right) \; ,
\ee
where we introduced the effective sample size 
\be
\label{eq:ESS}
\text{ESS}_n = \frac{1}{\sum_{i=1}^N (w_n^{(i)})^2}
\ee
which is a measure of the effective number of particles \cite{Kong94,Carpenter99}: if all particles carry the same weight, $\text{ESS}_n=N$, while if a single particle carries all the weight, $\text{ESS}_n=1$. This result therefore illustrates that the accuracy of the estimation by the empirical measure relies on a relatively homogeneneous distribution of particle weights. 

A natural strategy to counter the effect of particle degeneracy is to 'unweight' particles by resampling from the empirical posterior probability measure, ${\bf x}_n^{(i)} \sim \pi_n^N(d{\bf x}_n|{\bf y}_{1:n})$, such that the new approximation to the posterior is
\be
\tilde{\pi}_n^N(d{\bf x}_n | {\bf y}_{1:n}) = \frac{1}{N}\sum_{i=1}^N \delta_{{\bf x}_n^{(i)}}(d{\bf x}_n) \; .
\ee 
This is the idea behind the bootstrap particle filter \cite{Gordon93}, also known as the sequential importance resampling (SIR) algorithm. Its main drawback is that resampling according to a probability measure such as Eq.~(\ref{eq:multinomial}) essentially duplicates particles (particles are selected following a multinomial law), leading to so-called sample impoverishment \cite{SMC}.

\subsection{Regularized particle filter}

In Ref.~\cite{LeGland98}, LeGland, Musso and Oujdane proposed to address these shortcomings by regularizing the probability measure using a Parzen-Rosenblatt \cite{Parzen62} kernel density estimator. The idea is to replace the Dirac measure by a measure able to smoothly interpolate between all weighted particles. In other words, observing that the coarse-grained density on Borel set ${\cal A}_\epsilon\subset{\cal X}$
\be
\frac{1}{\lambda({\cal A}_\epsilon)}\pi_n^N({\cal A}_\epsilon) = \frac{1}{\lambda({\cal A}_\epsilon)}\sum_{i=1}^N w_n^{(i)} \delta_{{\bf x}_{n|n-1}^{(i)}}({\cal A}_\epsilon) \; ,
\ee
where $\lambda({\cal A}_\epsilon)=\int_{{\cal A}_\epsilon} \lambda(d{\bf x}_n)$, the regularization procedure consists in replacing $\delta_{{\bf x}}({\cal A}_\epsilon)/\lambda({\cal A}_\epsilon)$ as $\lambda({\cal A}_\epsilon) \to 0$ by a kernel density function with respect to the Lebesgue measure \cite{Tsybakov}, thanks to which the regularized estimator for the posterior probability density function reads
\be
\label{eq:RPF}
p_n^N({\bf x}_n | {\bf y}_{1:n}) = \sum_{i=1}^N w_n^{(i)} {\cal K}_{h_n}({\bf x}_n - {\bf x}_{n|n-1}^{(i)}) \; ,
\ee
with 
\begin{itemize}
\item[$\bullet$] $w_n^{(i)}=g_n({\bf y}_n | {\bf x}_{n|n-1}^{(i)}) / \sum_{j=1}^N g_n({\bf y}_n | {\bf x}_{n|n-1}^{(j)})$
\item[$\bullet$] ${\bf x}_{n|n-1}^{(i)} \sim f_n(d{\bf x}_n | {\bf x}_{n-1}^{(i)})$
\item[$\bullet$] ${\bf x}_{n-1}^{(i)} \sim p_{n-1}^N({\bf x}_{n-1} | {\bf y}_{1:n-1})$ \; .
\end{itemize}
In the following, we shall consider a Gaussian kernel ${\cal K}_{h_n} = {\cal N}(0, h_n^2)$: Eq.~(\ref{eq:RPF}) then effectively describes a Gaussian mixture and resampling according to this distribution now consists in a two-step procedure, with the particle selection (performed using the systematic resampling algorithm \cite{Hol06}) complemented by a perturbation controlled by the kernel bandwidth $h_n$. Each particle generated in the process is thus unique. Note that, if the likelihood function $g_n$ is unknown, it can itself be emulated by a kernel density estimator \cite{Campillo09,ChenPhD}.

There is a whole literature on how the kernel bandwidth $h_n$ can be chosen. For now, we will follow Ref.~\cite{Musso12} and pick $h_n^2=\alpha_h\Sigma_n$, with
\be
\label{eq:bandwidth}
\alpha_h = \left(\frac{4}{N(d_x+2)}\right)^{\frac{2}{d_x+4}} \; .
\ee
This choice minimizes the mean integrated squared error $\mathbb{E}\left[ ||p_n^N - p_n||_{L^2}^2 \right]$, provided the true probability density function $p_n$ is Gaussian (see Appendix \ref{app:band} for the proof in one dimension with uniform weights). In practice, the true covariance matrix $\Sigma_n$ is unknown and is therefore replaced by an estimator $\Sigma_n^N$ based on the weighted samples $({\bf x}_{n|n-1}^{(i)}, w_n^{(i)})$: 
\be
\Sigma_n^N=\frac{N}{N-1}\sum_{i=1}^N w_n^{(i)}({\bf x}_{n|n-1}^{(i)}-{\bf \mu}_n^N)^2
\ee
with ${\bf \mu}_n^N=\sum_{i=1}^N w_n^{(i)}{\bf x}_{n|n-1}^{(i)}$.
As an important corollary, because $\lim_{N\to\infty}h_n=0$, we retrieve asymptotically that, just as for the empirical probability measure, the regularized estimator converges towards the true posterior density: 
\be
\mathbb{E}\left[ ||p_n^N - p_n||_{L^2} \right]\underset{N\to\infty}{=}\mathcal{O}\left(N^{-\frac{2}{d_x+4}}\right) \; .
\ee

\subsection{Problem statement}
\label{sec:rr=1}

While the general strategy underlying the RPF seems very principled, we proceed to highlight a problem which can be encountered when resampling is performed on a systematic basis, {\it i.e.} at every step, as generally done. Recall that, given a prior probability measure $\pi_0(d{\bf x}_0)$ and a series of observations $\{{\bf y}_{1:n}\}$, we seek to determine the true posterior distribution $\hat{{\bf x}}_n \sim \pi_n(d{\bf x}_n|{\bf y}_{1:n})$. In particular, we wish to estimate how the RPF estimator $\hat{{\bf x}}_n^N \sim p_n^N({\bf x}_n|{\bf y}_{1:n})$ fares with respect to $\hat{{\bf x}}_n$. In what follows, we shall provide evidence that, for a finite number of particles, the RPF estimator features a systematic deviation from the true posterior density which increases with the number of observations. This deviation can be made explicit by comparing the root mean squared error of the RPF with that of the true posterior density:
\be
\begin{split}
\text{RMSE}(\hat{{\bf x}}_n, {\bf x}_n) & = \mathbb{E}\left[||\hat{{\bf x}}_n - {\bf x}_n||^2\right]^{1/2} 
\\
& = \sqrt{||{\bf \mu}_n-{\bf x}_n||^2+\text{Tr}[\Sigma_n]} \; ,
\end{split}
\ee
where $||.||$ indicates the usual Euclidean norm, and introducing the mean value ${\bf \mu}_n := \mathbb{E}[\hat{{\bf x}}_n | {\bf y}_{1:n}]$ and the covariance matrix $\Sigma_n := \mathbb{E}[(\hat{{\bf x}}_n - {\bf \mu}_n)(\hat{{\bf x}}_n  - {\bf \mu}_n)^T]$. For illustrative purposes, consider the very simple case of a one-dimensional stationary state $x_{n+1}=x_n$ with time-independent additive observation noise $\eta \sim {\cal N}(0, R)$: $y_n = x_n + \eta$. This is of course an elementary textbook setting, which is in fact a special case of the Kalman filter framework which shall be reviewed in the following section. The advantage of this setup is to provide us with the exact result with respect to which particle filter expectations can be benchmarked. In the left frame of Fig.~\ref{fig:1}, we plot the analytical solution and compare it with the outcome of the RPF when resampling is performed at each step.
\begin{figure*}
\begin{center}
  \includegraphics[angle=0,width=0.45\linewidth]{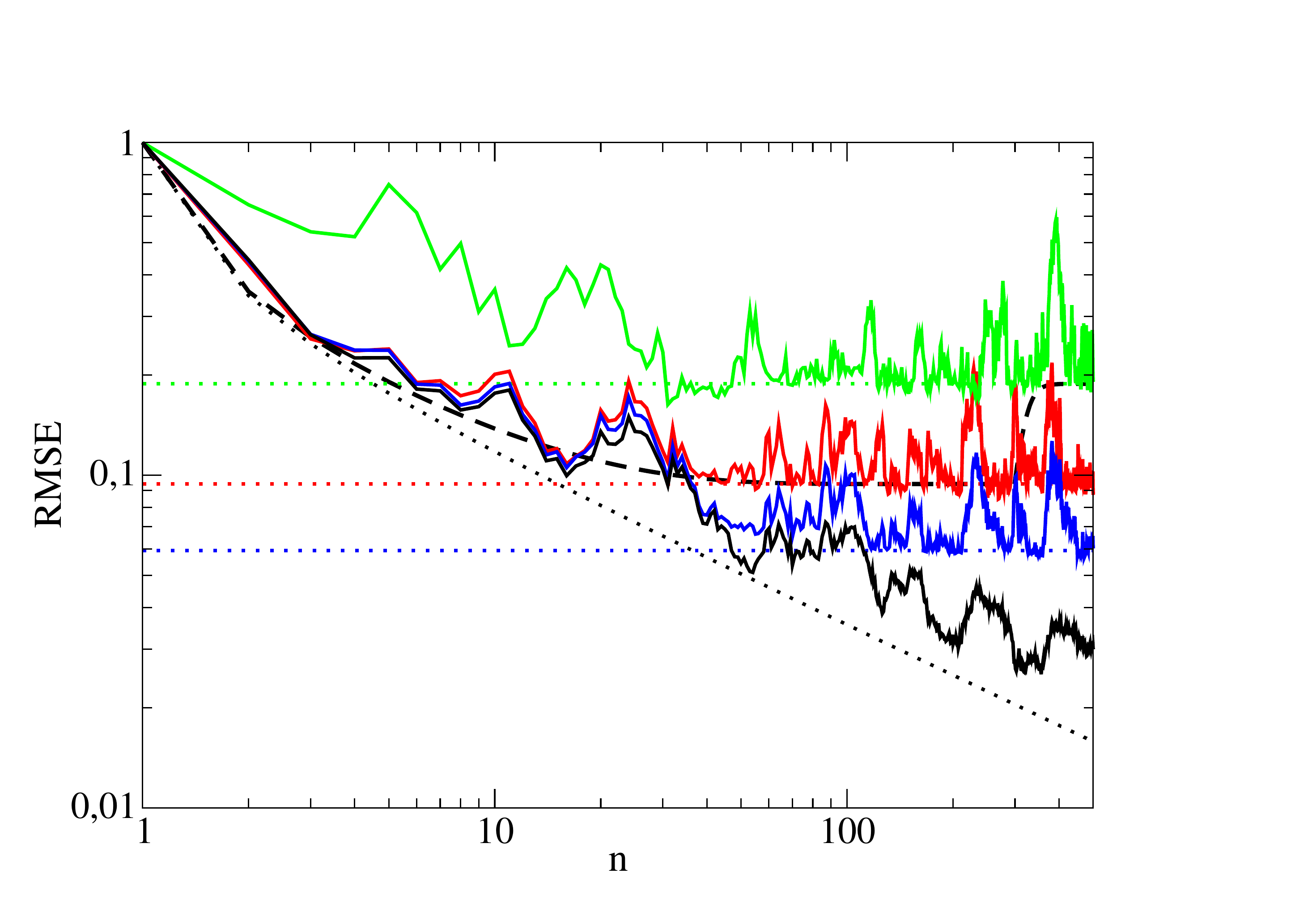}
  \includegraphics[angle=0,width=0.45\linewidth]{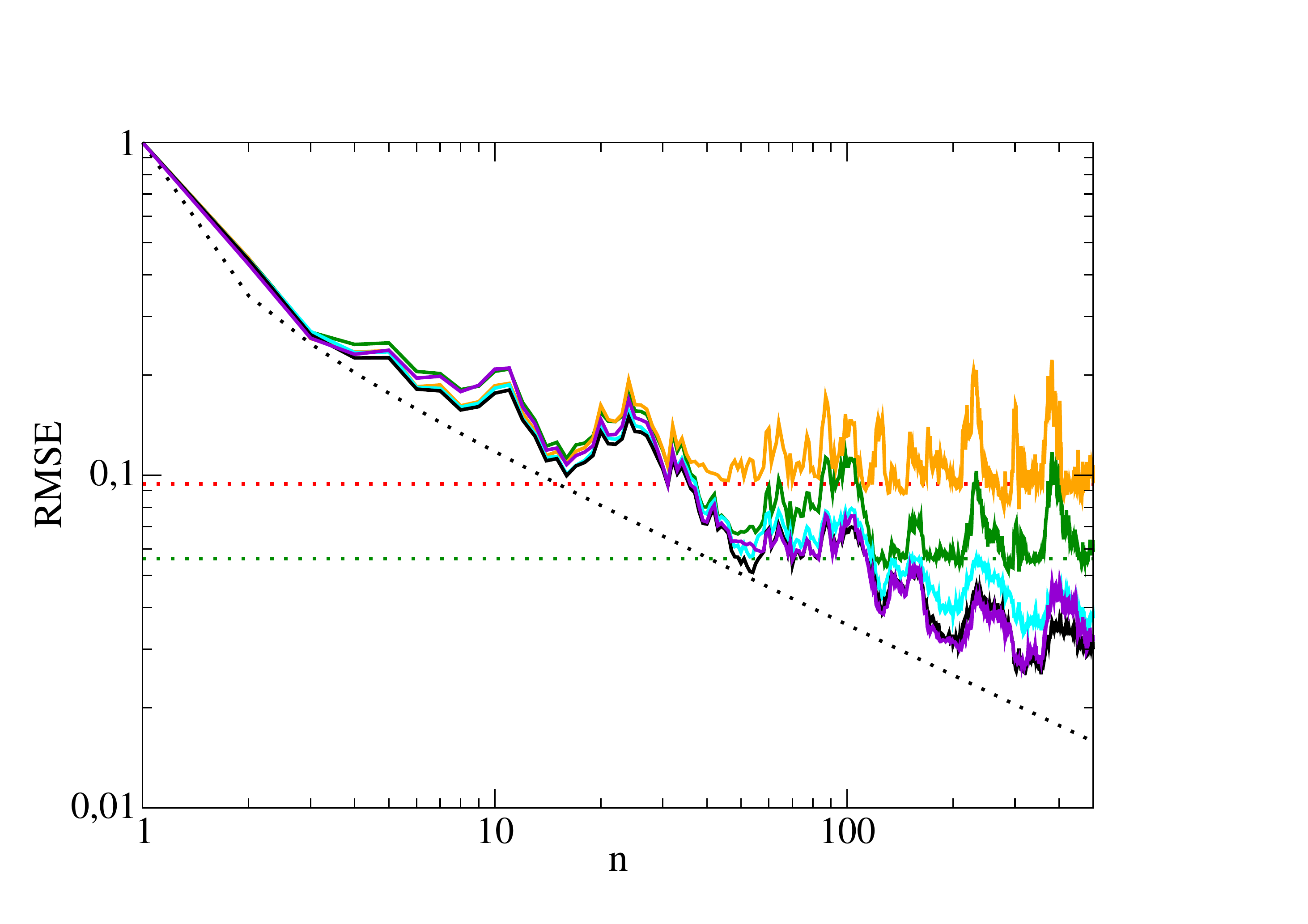}
\caption{Left frame: Normalized root mean squared error (RMSE) of the estimated state as a function of the number of observations in logarithmic scale for $N=1000$ particles and normalized observation noise variance $R/\Sigma_0=0.25$. The black solid line is the expectation based on the Kalman filter, while the red solid line is the outcome of the RPF with resampling at each step. The latter clearly shows a saturation effect at a value given by the red dotted line, which we compute analytically in section \ref{sec:KF}. Blue lines are for $N=10000$ particles and green lines for $R/\Sigma_0=1$. The black dashed line is obtained by taking into account the bandwidth perturbation inside the Kalman filter, with a quench in normalized observation noise variance from 0.25 to 1 at $n=300$. Other parameters: prior mean value $\mu_0=x_0+\sqrt{\Sigma_0}$. Right frame: Likewise, for different bandwidth selection criteria: Silverman's rule of thumb $h_n^2=(4/N)^{2/3}\Sigma_n$ (dark green), direct plugin method by Sheather and Jones \cite{Sheather04} (orange), West shrinkage \cite{West93} (cyan) and our proposal of kernel bandwidth modulation $\alpha_n=\alpha_h/(1+n\alpha_h)$ (violet).}
\label{fig:1}
\end{center}
\end{figure*}
As can be clearly seen, while the analytical expectation steadily decreases with each new observation, the RMSE computed using the RPF rather quickly saturates and remains lower bounded by a positive constant no matter how many observations are considered. The explanation for this behavior essentially stems from the fact that resampling from the regularized estimator, while allowing to explore particle space, also acts as a perturbation which increases the variance of the estimator. This perturbation becomes detrimental for the estimation process in the long run, as is manifest in the left frame of Fig.~\ref{fig:1}. While this problem was already acknowledged in a slightly different context in the early days of particle filters \cite{West93}, some of its implications may not have been fully grasped at the time. In particular, we claim that the over-estimation of the variance prevents the estimator from remaining asymptotically efficient as the number of observations increases, in a sense which shall be defined shortly. Mathematical arguments supporting this claim for linear state-space models will be provided in the following section, while numerical simulations indicating that our results extend to non-linear models will be presented in section \ref{sec:NL}.

\section{Theoretical results in the linear Gaussian case}
\label{sec:KF}

We now address the origin of the saturation effect in the estimated RMSE in more detail and claim that it can be understood using results which we obtain in the setting of linear Gaussian state-space models. Before providing mathematical evidence supporting our claim, we begin by recalling standard results for the derivation of the filtered probability density function in the linear Gaussian case, which is the framework of the famous Kalman filter.

\subsection{Kalman filter framework}

The framework of the Kalman filter \cite{Kalman60,Kalman61} is obtained by considering the special case where transition and measurement functions are linear, and assuming that noise and prior probability density functions are Gaussian distributed with respect to the Lebesgue measure. The HMM equations then read
\be
\label{eq:linHMM}
\left|
\begin{array}{l}
{\bf x}_n = A_n {\bf x}_{n-1} + \epsilon_n
\vspace{0.1cm}
\\
{\bf y}_n = B_n {\bf x}_n + \eta_n
\end{array}
\right.
\ee
where $\{A_n\}_{n\geq 1}$ and $\{B_n\}_{n\geq 1}$ are referred to as transition and measurement matrices, and with $\epsilon_n \sim {\cal N}(0, Q_n)$ and $\eta_n \sim {\cal N}(0, R_n)$. In this situation, given a prior $\hat{{\bf x}}_0 \sim {\cal N}(\mu_0, \Sigma_0)$, the filtering problem can be solved exactly: analytical expressions can be obtained for the posterior $\hat{{\bf x}}_n \sim {\cal N}(\mu_n, \Sigma_n)$ which is fully characterized by its mean value ${\bf \mu}_n$ and (symmetric positive definite) covariance matrix $\Sigma_n$. Their expressions are provided by the recursive formulae
\be
\label{eq:Kalman}
\left|
\begin{array}{l}
\Sigma_n^{-1} = \Sigma_{n|n-1}^{-1} + B_n^TR_n^{-1}B_n
\vspace{0.1cm}
\\
\Sigma_n^{-1}{\bf \mu}_n = \Sigma_{n|n-1}^{-1}{\bf \mu}_{n|n-1} + B_n^TR_n^{-1}{\bf y}_n
\end{array}
\; ,
\right.
\ee
with ${\bf \mu}_{n|n-1} = A_n {\bf \mu}_{n-1}$ and $\Sigma_{n|n-1} = A_n \Sigma_{n-1} A_n^T + Q_n$ (see Appendix \ref{app:KF} for a derivation). A simple way of obtaining these results is to make use of the Bayesian conjugacy property of Gaussian distributions: in other words, given a Gaussian prior and a Gaussian likelihood, the posterior is also Gaussian. Applying Bayes rule to the predicted and posterior probability measures, respectively as
\be
\begin{split}
\pi_{n|n-1}(d{\bf x}_n | {\bf y}_{1:n-1})  = \int_{\cal X} & f_n(d{\bf x}_n | {\bf x}_{n-1}) 
\\
& \times \pi_{n-1}(d{\bf x}_{n-1} | {\bf y}_{1:n-1}) \; ,
\end{split}
\ee 
and
\be
\pi_n(d{\bf x}_n | {\bf y}_{1:n}) = \frac{g_n({\bf y}_n | {\bf x}_n) \pi_{n|n-1}(d{\bf x}_n | {\bf y}_{1:n-1})}{\int_{\cal X} g_n({\bf y}_n | {\bf x}_n) \pi_{n|n-1}(d{\bf x}_n | {\bf y}_{1:n-1})} \; ,
\ee
thus allows for a straightforward derivation of Eq.~(\ref{eq:Kalman}). We next focus on the special case where all states are assumed stationary. This corresponds to taking $A_k=\mathbb{I}$ and $Q_k=0$ for $k\geq 1$ in Eq.~(\ref{eq:linHMM}).
If we further fix the measurement covariance matrices to unity $B_k=\mathbb{I}$ and assume that observation noise covariance matrices are time-independent, one can show using the matrix inversion lemma that 
\be
\label{eq:KalmanParam1}
\left|
\begin{array}{l}
\Sigma_n = R(R+n\Sigma_0)^{-1}\Sigma_0
\vspace{0.1cm}
\\
{\bf \mu}_n = \Sigma_n\Sigma_0^{-1} {\bf \mu}_0 + (\mathbb{I} - \Sigma_n\Sigma_0^{-1}) \langle {\bf y} \rangle_n
\end{array}
\right.
\ee
with $\langle {\bf y} \rangle_n := \sum_{j=1}^n{\bf y}_j/n$ the maximum likelihood estimator. This situation corresponds to a multi-dimensional generalization of the stationary model we considered earlier. Note that $\Sigma_n$ reaches the Cramer-Rao bound $R/n$ as $n\to \infty$, while $\lim_{n\to\infty}{\bf \mu}_n = {\bf x}_0$: we thus retrieve that, in accordance with the Bernstein-von Mises theorem, the Bayes estimator $\hat{{\bf x}}_n$ is asymptotically efficient and coincides in that limit with the maximum likelihood estimator, as the dependence on the choice of prior vanishes. The corresponding $\text{RMSE} = \mathcal{O}(n^{-1/2})$ also vanishes asymptotically. This behavior might seem like a natural byproduct in the limit of a large number of observations, since the latter might be expected to eventually provide sufficient information for an arbitrarily accurate estimation of the true state. There are, however, elementary counterexamples to this fact which can be very easily proven using the Kalman filter equations we have derived. For example, if we restore a simple time-independent transition matrix $A=a \mathbb{I}$ to the above estimation paradigm, one can easily show that the asymptotic behavior of $\Sigma_n$ depends on the magnitude of $a$: while if $a<1$, the covariance matrix decays exponentially fast, if $a>1$ however, then the covariance matrix remains lower bounded as $\Sigma_n \geq (1-a^{-2})R$. Note that if $a=1$, we recover the algebraic dependence as stated in Eq.~(\ref{eq:KalmanParam1}). Another enlightening case is when state noise $Q\neq0$ is included. Given time-independent state and observation noises, one can show that $\Sigma_n$ necessarily remains lower bounded by a positive matrix \cite{HMM}. 

\subsection{Regularized RMSE asymptotics}

Let us now investigate how the above results can be transposed in the framework of the RPF. For a Gaussian regularizing kernel, Bayesian conjugacy ensures that the posterior distribution Eq.~(\ref{eq:RPF}) is asymptotically Gaussian:
\be
\label{eq:RPFMC}
\begin{split}
& p_n^N({\bf x}_n | {\bf y}_{1:n}) = \frac{\frac{1}{N} \sum_{i=1}^N g_n({\bf y}_n | {\bf x}_{n|n-1}^{(i)}) {\cal K}_{h_n}({\bf x}_n - {\bf x}_{n|n-1}^{(i)})}{\frac{1}{N} \sum_{j=1}^N g_n({\bf y}_n | {\bf x}_{n|n-1}^{(j)})}
\\
& \underset{N\to\infty}{\approx} \frac{\int_{\cal X} \lambda(d{\bf x}'_n) p_{n|n-1}^N({\bf x}'_n | {\bf y}_{1:n-1}) g_n({\bf y}_n | {\bf x}'_n) {\cal K}_{h_n}({\bf x}_n - {\bf x}'_n)}{\int_{\cal X} \lambda(d{\bf x}'_n) p_{n|n-1}^N({\bf x}'_n | {\bf y}_{1:n-1}) g_n({\bf y}_n | {\bf x}'_n)} \; ,
\end{split}
\ee
with mean 
\be
\label{eq:RPFMean}
\left((\Sigma_{n|n-1}^N)^{-1}+B_n^TR_n^{-1}B_n\right){\bf \mu}_n^N = (\Sigma_{n|n-1}^N)^{-1}{\bf \mu}_{n|n-1}^N + B_n^TR_n^{-1}{\bf y}_n
\ee
and covariance matrix
\be
\label{eq:RPFSigma}
\begin{split}
\Sigma_n^N & = h_n^2 + \left((\Sigma_{n|n-1}^N)^{-1}+B_n^TR_n^{-1}B_n\right)^{-1}
\\
& = (1+\alpha_h)\left((\Sigma_{n|n-1}^N)^{-1}+B_n^TR_n^{-1}B_n\right)^{-1} \; , 
\end{split}
\ee
with ${\bf \mu}_{n|n-1}^N = A_n {\bf \mu}_{n-1}^N$ and $\Sigma_{n|n-1}^N = A_n \Sigma_{n-1}^N A_n^T + Q_n$, and where $\alpha_h$ is defined in Eq.~(\ref{eq:bandwidth}). Note that, as $N\to\infty$, we recover the standard Kalman result from Eq.~(\ref{eq:Kalman}) as we should, since the kernel acts as a Dirac delta-function in Eq.~(\ref{eq:RPFMC}) as $\alpha_h\to0$. Additionnally, while we work in the limit of $N\to\infty$ and choose to conserve the $N$-dependence of the kernel bandwidth, we explicitly assume the Monte Carlo error $\propto N^{-1/2}$ is negligible when replacing the sum by the integral in Eq.~(\ref{eq:RPFMC}). We now proceed to illustrate the consequences of Eqs.~(\ref{eq:RPFMean},\ref{eq:RPFSigma}) by returning for simplicity to the case of stationary states, where exact expressions for the posterior moments can be obtained by solving Eqs.~(\ref{eq:RPFMean},\ref{eq:RPFSigma}) recursively. For reasons which will be clear later, we shall allow the possibility for $\alpha_h$ to be time-dependent. Note that our analysis should carry over to any dynamical system described by Eqs.~(\ref{eq:linHMM}) for which the Bayes estimator is asymptotically efficient: this includes systems with noiseless and compact supported state dynamics, and for which the observation model is regular in some sense \cite{CarmierKalman}.
\begin{lemma}
\label{lemma:general}
Consider the recursively defined set of equations for $n\geq1$ on the couple of variables $(\mu_n, \Sigma_n)$,
\be
\begin{array}{l}
\left(\Sigma_{n-1}^{-1}+R^{-1}\right){\bf \mu}_n = \Sigma_{n-1}^{-1}{\bf \mu}_{n-1} + R^{-1}{\bf y}_n
\vspace{0.1cm}
\\
\Sigma_n = (1+\alpha_n)R\left(R+\Sigma_{n-1}\right)^{-1}\Sigma_{n-1}
\end{array}
\ee
where $\{\alpha_n\}_{n\geq1}$ is a sequence of positive real numbers, and $R$ and $\{{\bf y}\}_{n\geq1}$ are defined as in Eq.~(\ref{eq:KalmanParam1}). Starting from the initial condition $(\mu_0, \Sigma_0)$, the solution to the above set of equations can be expressed as
\be
\label{eq:General}
\left|
\begin{array}{l}
\Sigma_n = \prod_{j=1}^n(1+\alpha_j)R\left( R + \sum_{j=1}^n\prod_{k=0}^{j-1}(1+\alpha_k)\Sigma_0 \right)^{-1}\Sigma_0
\vspace{0.1cm}
\\
\begin{split}
& {\bf \mu}_n = \frac{1}{\prod_{j=1}^n(1+\alpha_j)}\Sigma_n\Sigma_0^{-1}{\bf \mu}_0
\\
& + \left(\mathbb{I} - \frac{1}{\prod_{j=1}^n(1+\alpha_j)}\Sigma_n\Sigma_0^{-1}\right)\frac{\sum_{j=1}^n\prod_{k=0}^{j-1}(1+\alpha_k){\bf y}_j}{\sum_{j=1}^n\prod_{k=0}^{j-1}(1+\alpha_k)}
\end{split}
\end{array}
\right.
\ee
with $\alpha_0=0$.
\end{lemma}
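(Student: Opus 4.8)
The plan is to reduce both recursions to linear, telescoping ones by introducing the right variables, solve them in closed form, and then check that the two displayed expressions are merely rearrangements of the result (this is cleaner than a brute-force induction, though formally equivalent). Throughout I write $P_n=\prod_{j=1}^n(1+\alpha_j)$ with $P_0=1$, and $S_n=\sum_{j=1}^n P_{j-1}$; since $\alpha_0=0$ one has $\prod_{k=0}^{j-1}(1+\alpha_k)=P_{j-1}$, so the products and sums appearing in Eq.~(\ref{eq:General}) are exactly $P_n$ and $S_n$. The starting observation is that the covariance recursion inverts nicely: from $\Sigma_n=(1+\alpha_n)R(R+\Sigma_{n-1})^{-1}\Sigma_{n-1}$ one checks directly that $\left(R(R+\Sigma_{n-1})^{-1}\Sigma_{n-1}\right)^{-1}=\Sigma_{n-1}^{-1}+R^{-1}$, hence $(1+\alpha_n)\Sigma_n^{-1}=\Sigma_{n-1}^{-1}+R^{-1}$. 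This single identity is what linearizes everything.

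For the covariance, I would multiply the inverted recursion by $P_n$ and set $\Phi_n=P_n\Sigma_n^{-1}$. Since $P_n/(1+\alpha_n)=P_{n-1}$, this gives the affine recursion $\Phi_n=\Phi_{n-1}+P_{n-1}R^{-1}$, which telescopes to $\Phi_n=\Sigma_0^{-1}+S_nR^{-1}$ using $\Phi_0=\Sigma_0^{-1}$. Inverting, $\Sigma_n=P_n(\Sigma_0^{-1}+S_nR^{-1})^{-1}$, and the one-line factorization $\Sigma_0^{-1}+S_nR^{-1}=\Sigma_0^{-1}(R+S_n\Sigma_0)R^{-1}$ (legitimate because $S_n$ is a scalar) turns this into $\Sigma_n=P_nR(R+S_n\Sigma_0)^{-1}\Sigma_0$, the first line of Eq.~(\ref{eq:General}). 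The only point requiring care is that $R$ and $\Sigma_0$ need not commute, so matrix ordering must be preserved; symmetry of $\Sigma_n$ guarantees that the transposed form $\Sigma_0(R+S_n\Sigma_0)^{-1}R$ is equally valid.

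For the mean, the natural substitution is the information vector $w_n=\Sigma_n^{-1}\mu_n$. Using $\Sigma_{n-1}^{-1}+R^{-1}=(1+\alpha_n)\Sigma_n^{-1}$, the mean recursion collapses to $(1+\alpha_n)w_n=w_{n-1}+R^{-1}y_n$; setting $u_n=P_nw_n$ telescopes it to $u_n=u_{n-1}+P_{n-1}R^{-1}y_n$, i.e. $u_n=\Sigma_0^{-1}\mu_0+R^{-1}\sum_{j=1}^nP_{j-1}y_j$. Undoing the substitutions yields $\mu_n=P_n^{-1}\Sigma_n\Sigma_0^{-1}\mu_0+P_n^{-1}\Sigma_nR^{-1}\sum_{j=1}^nP_{j-1}y_j$, whose first term already matches Eq.~(\ref{eq:General}). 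To recover the stated convex-combination form of the second term I would invoke the covariance result once more, as the identity $S_nP_n^{-1}\Sigma_nR^{-1}=\mathbb{I}-P_n^{-1}\Sigma_n\Sigma_0^{-1}$, which is nothing but $\Sigma_n=P_n(\Sigma_0^{-1}+S_nR^{-1})^{-1}$ rearranged; dividing the sum by $S_n$ then produces the weighted average $\left(\sum_{j=1}^nP_{j-1}y_j\right)/S_n$ with the advertised prefactor $\mathbb{I}-P_n^{-1}\Sigma_n\Sigma_0^{-1}$. The main obstacle is purely bookkeeping: spotting the two linearizing substitutions ($\Sigma_n^{-1}$ and $w_n$) and keeping the non-commuting factors $R$ and $\Sigma_0$ in the correct order; once these are in place each recursion telescopes and the rest is routine algebraic rearrangement.
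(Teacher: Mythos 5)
Your proof is correct, and it takes a genuinely different (and more informative) route than the paper, whose entire proof is the one-line assertion that the formula ``can be straightforwardly proven by recurrence'' --- i.e.\ a mechanical verification that the closed form at step $n-1$, fed into the recursion, reproduces the closed form at step $n$. You instead \emph{derive} the solution: the key identity $\left(R(R+\Sigma_{n-1})^{-1}\Sigma_{n-1}\right)^{-1}=\Sigma_{n-1}^{-1}+R^{-1}$ turns the covariance recursion into $(1+\alpha_n)\Sigma_n^{-1}=\Sigma_{n-1}^{-1}+R^{-1}$, and passing to the natural (information-form) parameters $P_n\Sigma_n^{-1}$ and $P_n\Sigma_n^{-1}\mu_n$ makes both recursions affine with increment $P_{n-1}R^{-1}$ and $P_{n-1}R^{-1}{\bf y}_n$ respectively, so they telescope. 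I checked the individual steps: the factorization $\Sigma_0^{-1}+S_nR^{-1}=\Sigma_0^{-1}(R+S_n\Sigma_0)R^{-1}$ is legitimate precisely because $S_n$ is scalar, and the identity $S_nP_n^{-1}\Sigma_nR^{-1}=\mathbb{I}-P_n^{-1}\Sigma_n\Sigma_0^{-1}$ that converts the second term of the mean into the stated convex combination is indeed just $P_n^{-1}\Sigma_n(\Sigma_0^{-1}+S_nR^{-1})=\mathbb{I}$ rearranged. What your approach buys is an explanation of \emph{why} the solution has the form it does (the weights $P_{j-1}$ on the observations appear as the telescoped increments of the information vector, which directly supports the paper's later ``loss of memory'' interpretation), at the modest cost of having to track the ordering of the non-commuting factors $R$ and $\Sigma_0$ --- which you do correctly. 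The only implicit hypotheses you rely on (positive definiteness, hence invertibility, of $R$, $\Sigma_0$ and all $\Sigma_n$) are the same ones the paper assumes throughout.
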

\begin{proof}
This can be straightforwardly proven by recurrence.
\qed
\end{proof}
As a first application of lemma \ref{lemma:general}, consider the case where the sequence $\{\alpha_n\}_{n\geq1}$ is time-independent. This is our original motivation, corresponding to the choice of bandwidth indicated in Eq.~(\ref{eq:bandwidth}): $\alpha_n = \alpha_h \; \forall n\geq1$. Expressions for the posterior moments simplify as
\be
\label{eq:Kalpha}
\left|
\begin{array}{l}
\Sigma_n^N = (1+\alpha_h)^nR\left( R + \sum_{j=1}^n(1+\alpha_h)^{j-1}\Sigma_0 \right)^{-1}\Sigma_0
\vspace{0.1cm}
\\
\begin{split}
{\bf \mu}_n^N & = \frac{1}{(1+\alpha_h)^n}\Sigma_n^N\Sigma_0^{-1}{\bf \mu}_0
\\
& + \left(\mathbb{I} - \frac{1}{(1+\alpha_h)^n}\Sigma_n^N\Sigma_0^{-1}\right)\frac{\sum_{j=1}^n (1+\alpha_h)^{j-1}{\bf y}_j}{\sum_{j=1}^n (1+\alpha_h)^{j-1}}
\end{split}
\end{array}
\right.
\ee
We obtain as a direct consequence that, if $\alpha_h>0$ (for finite $N$), the covariance matrix does not converge to zero as the number of observations increases.
\begin{theorem}[RPF fixed point]
\label{th:bound}
The RPF estimator for the posterior covariance matrix converges exponentially fast towards a non-vanishing fixed point:
\be
\label{eq:bound}
\lim_{n\to\infty}\Sigma_n^N = \alpha_h R (1 + \mathcal{O}(e^{-n\alpha_h})) \; .
\ee
Likewise, the estimated mean value converges exponentially fast towards a random variable 
\be
\label{eq:const}
{\bf \mu}_n^N \underset{n\to\infty}{\approx} \frac{\sum_{j=1}^n (1+\alpha_h)^{j-1}{\bf y}_j}{\sum_{j=1}^n (1+\alpha_h)^{j-1}} \left( 1 + \mathcal{O}(e^{-n\alpha_h}) \right)
\ee
centered around the observation mean value ${\bf x}_n={\bf x}_0$, with residual covariance 
\be
\label{eq:xfluct}
\lim_{n\to\infty}\mathbb{E}[({\bf \mu}_n^N-{\bf x}_n)({\bf \mu}_n^N-{\bf x}_n)^T] = \frac{\alpha_h}{2+\alpha_h}R \; .
\ee
As a result, the asymptotic $\text{RMSE}=\mathcal{O}(\sqrt{\alpha_h})$ remains lower bounded by a positive constant as soon as the number of particles is finite, no matter how many observations.
\end{theorem}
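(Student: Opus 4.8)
The plan is to work directly from the closed-form solution Eq.~(\ref{eq:Kalpha}), which already resolves the recursion in the constant-bandwidth case $\alpha_n=\alpha_h$, and to extract the $n\to\infty$ asymptotics of its two lines. Writing $\beta=1+\alpha_h$ and evaluating the geometric sum $\sum_{j=1}^n\beta^{j-1}=(\beta^n-1)/\alpha_h=:S_n$, the covariance line reads $\Sigma_n^N=\beta^n R\,(R+S_n\Sigma_0)^{-1}\Sigma_0$. Factoring $R+S_n\Sigma_0=S_n\Sigma_0(\mathbb{I}+S_n^{-1}\Sigma_0^{-1}R)$ collapses this to $\Sigma_n^N=\frac{\beta^n}{S_n}R(\mathbb{I}+S_n^{-1}\Sigma_0^{-1}R)^{-1}$, and since $\beta^n/S_n=\alpha_h/(1-\beta^{-n})$ with $\beta^{-n}=e^{-n\ln(1+\alpha_h)}$ decaying exponentially, one obtains $\Sigma_n^N=\alpha_h R(1+\mathcal{O}(e^{-n\alpha_h}))$, which is Eq.~(\ref{eq:bound}); the decay rate is $\ln(1+\alpha_h)$, coinciding with $\alpha_h$ to leading order in the small bandwidths of interest. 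A subtlety I would flag here is matrix ordering: $R$ and $\Sigma_0$ need not commute, so I would keep the inverse factors in place and check that the prior enters only through the $\mathcal{O}(S_n^{-1})$ correction, making the limit $\alpha_h R$ genuinely independent of $\Sigma_0$.

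For the mean, the same estimate shows the prefactor $\beta^{-n}\Sigma_n^N\Sigma_0^{-1}$ is $\mathcal{O}(e^{-n\alpha_h})$, so the first term in the second line of Eq.~(\ref{eq:Kalpha}) is exponentially suppressed and the bracket multiplying the weighted observation average tends to $\mathbb{I}$. This gives Eq.~(\ref{eq:const}) directly: up to exponentially small corrections, ${\bf \mu}_n^N$ equals the geometrically weighted average $\bar{\bf y}_n:=\sum_{j=1}^n\beta^{j-1}{\bf y}_j/\sum_{j=1}^n\beta^{j-1}$, so prior information is forgotten exponentially fast.

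The crux of the argument is the residual covariance Eq.~(\ref{eq:xfluct}). Inserting the stationary model ${\bf y}_j={\bf x}_0+\eta_j$ with i.i.d. $\eta_j\sim\mathcal{N}(0,R)$ gives $\bar{\bf y}_n-{\bf x}_0=\sum_j\beta^{j-1}\eta_j/\sum_j\beta^{j-1}$, whose covariance, by independence, is $R$ times the ratio $\sum_j\beta^{2(j-1)}/(\sum_j\beta^{j-1})^2$. Evaluating both geometric sums and using $\beta^2-1=(\beta-1)(\beta+1)$ together with $\beta^{2n}-1=(\beta^n-1)(\beta^n+1)$, the ratio telescopes to $\frac{(\beta-1)(\beta^n+1)}{(\beta+1)(\beta^n-1)}$, which tends to $(\beta-1)/(\beta+1)=\alpha_h/(2+\alpha_h)$ as $n\to\infty$. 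This is the quantitative heart of the result: the exponential weighting retains only an effectively finite number of recent observations, so the averaging never attains the Cram\'er--Rao decay and the mean fluctuation saturates. I expect this step to carry the real content, the preceding two being careful bookkeeping of geometric series.

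Finally, combining both contributions via $\mathbb{E}[\text{RMSE}^2]=\mathbb{E}[\|{\bf \mu}_n^N-{\bf x}_n\|^2]+\text{Tr}[\Sigma_n^N]$ and taking traces of Eqs.~(\ref{eq:bound}) and (\ref{eq:xfluct}) yields $\lim_{n\to\infty}\mathbb{E}[\text{RMSE}^2]=\alpha_h\,\frac{3+\alpha_h}{2+\alpha_h}\,\text{Tr}[R]$, strictly positive whenever $\alpha_h>0$, i.e. whenever $N<\infty$, and scaling as $\mathcal{O}(\alpha_h)$; taking the square root gives the claimed $\mathcal{O}(\sqrt{\alpha_h})$ floor and completes the proof.
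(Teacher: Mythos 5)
Your proposal is correct and takes essentially the same route as the paper, which presents the theorem as a direct consequence of the closed-form solution Eq.~(\ref{eq:Kalpha}); you simply carry out the geometric-sum asymptotics (covariance prefactor $\beta^n/S_n\to\alpha_h$, exponential suppression of the prior term, and the ratio $\sum_j\beta^{2(j-1)}/(\sum_j\beta^{j-1})^2\to(\beta-1)/(\beta+1)$) that the paper leaves implicit. Your remarks on matrix ordering and on the decay rate being $\ln(1+\alpha_h)$ rather than $\alpha_h$ are valid refinements of the paper's statement.
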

As we have already mentioned, this situation can also occur in the original Kalman filter (retrieved in the limit of an infinite number of particles) as soon as process noise $Q\neq0$ is included. In the present case, the reason for this phenomenon is that the kernel bandwidth used for resampling acts as a perturbation which increases the variance and competes with the variance reduction due to each new observation. The problem is that the variance reduction rate with $\Sigma_n^N$ is faster (quadratic) than that of the bandwidth (linear): $\Sigma_{n+1}^N-\Sigma_n^N = (\alpha_h R - \Sigma_n^N)(R + \Sigma_n^N)^{-1}\Sigma_n^N$. Arrives therefore a moment when the ever smaller piece of information provided by the new observation is exactly compensated by the bandwidth perturbation. 

A complementary and enlightening interpretation of our results can be put forward by noting that each observation in Eq.~(\ref{eq:const}) contributes differently to the estimated posterior mean value, \textit{i.e.} each ${\bf y}_j$ carries a different weight. In particular, the weighting ratio between first (oldest) and last (most recent) observations roughly scales like $e^{-n\alpha_h}$, which signifies that the RPF estimator loses memory of previous observations exponentially fast. This loss of memory is directly responsible for the inability of the estimator to process the information provided by the full history of observations as exhaustively as the Kalman estimator, since it acts as an effective temporal resolution. The lower bound on the RMSE obtained in theorem \ref{th:bound} is thus simply the translation of this temporal resolution in state space.

Returning to Fig.~\ref{fig:1}, we check the validity of our results by comparing them with numerical simulations of the one-dimensional stationary HMM introduced in section \ref{sec:rr=1}. Interestingly, although conclusions of theorem \ref{th:bound} based on Eqs.~(\ref{eq:RPFMean},\ref{eq:RPFSigma}) were derived in the asymptotic limit $N\to\infty$, numerical simulations support their well-foundedness already for $N=1000$ particles. The late time fluctuations in the RMSE of the RPF correspond to those of mean value in Eq.~(\ref{eq:xfluct}). Next, because $\lim_{N\to\infty} \alpha_h = 0$, we expect according to theorem \ref{th:bound} the lower bound to decrease as $N$ increases. This is clearly illustrated by the blue curve in the left frame of Fig.~\ref{fig:1}. Increasing the observation noise variance obviously leads to the opposite effect (green curve). We plotted as a dashed black line the analytical prediction based on the Kalman filter from Eq.~(\ref{eq:Kalpha}), where the observation noise variance $R/\Sigma_0$ was quenched from the value 0.25 to 1 at $n=300$. As one can see, the computed posterior variance on the state reacts to the quench by increasing and converging towards the new asymptotic fixed point. Notice that increasing $d_x$ is also detrimental (not shown), since $\lim_{d_x\to\infty} \alpha_h = 1$ according to Eq.~(\ref{eq:bandwidth}). Next, we show in the right frame that other bandwidth selection criteria such as Silverman and Scott's rule of thumbs or the direct plug-in method by Sheather and Jones \cite{Sheather04} lead to the same problem when resampling is performed at each step. Another natural strategy is to perform resampling only occasionally. However, resampling periodically every $p$ step only allows reducing the location of the fixed point roughly a factor $1/p$, as is shown in the following corollary. 
\begin{corollary}
\label{cor:period}
Given a resampling period $p\in\mathbb{N}^*$ and an integer $0\leq q<p$, the RPF estimator for the posterior covariance matrix $\Sigma_{np+q}^N$ converges exponentially fast towards the non-vanishing q-dependent solution:
\be
\label{eq:bound_period}
\lim_{n\to\infty}\Sigma_{np+q}^N = \frac{\alpha_h R}{p+\alpha_hq} (1 + \mathcal{O}(e^{-n\alpha_h})) \; .
\ee
As a result, the asymptotic covariance matrix fluctuates between the upper bound $\alpha_hR/p$ and the lower bound $\alpha_hR/(p+(p-1)\alpha_h)$. Likewise, the estimated mean value converges exponentially fast towards a random variable centered around the observation mean value ${\bf x}_n$, with a q-dependent residual covariance
\be
\begin{split}
& \lim_{n\to\infty}\mathbb{E}[({\bf \mu}_{np+q}^N-{\bf x}_n)({\bf \mu}_{np+q}^N-{\bf x}_n)^T] 
\\
& = \frac{\alpha_h}{2+\alpha_h}\frac{p+\alpha_h(2+\alpha_h)q}{(p+\alpha_hq)^2}R \; .
\end{split}
\ee
This translates in an asymptotic $\text{RMSE}=\mathcal{O}(\sqrt{\alpha_h/p})$.
\end{corollary}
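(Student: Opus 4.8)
The plan is to recognize periodic resampling as a particular instance of Lemma~\ref{lemma:general}: resampling once every $p$ steps injects the bandwidth perturbation only at times that are multiples of $p$, so I would set $\alpha_j=\alpha_h$ whenever $p\mid j$ and $\alpha_j=0$ otherwise, and then specialize the closed form~(\ref{eq:General}). With this choice the two building blocks of~(\ref{eq:General}) become elementary. For $0\le q<p$ the product $\prod_{j=1}^{np+q}(1+\alpha_j)$ counts the $n$ multiples of $p$ not exceeding $np+q$ and therefore equals $(1+\alpha_h)^n$, while $\prod_{k=0}^{j-1}(1+\alpha_k)=(1+\alpha_h)^{\lfloor (j-1)/p\rfloor}$. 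The indexing convention is fixed by requiring $q=0$ to coincide with a resampling step.

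First I would evaluate the weight sum $S_{np+q}:=\sum_{j=1}^{np+q}\prod_{k=0}^{j-1}(1+\alpha_k)$ by grouping the indices into the $n$ complete blocks of length $p$ (each contributing $p$ identical terms) plus a residual block of $q$ terms; the geometric series then collapses this to $S_{np+q}=p\,[(1+\alpha_h)^n-1]/\alpha_h+q(1+\alpha_h)^n$. Substituting $(1+\alpha_h)^n$ and $S_{np+q}$ into the covariance line of~(\ref{eq:General}), factoring out the dominant $(1+\alpha_h)^n$ and letting $n\to\infty$ discards the $R$-term and the $-1$, leaving $\lim_n\Sigma_{np+q}^N=R([p/\alpha_h+q]\Sigma_0)^{-1}\Sigma_0=\alpha_h R/(p+\alpha_h q)$, which is~(\ref{eq:bound_period}); the neglected pieces are $\mathcal{O}((1+\alpha_h)^{-n})$, matching the $\mathcal{O}(e^{-n\alpha_h})$ rate of Theorem~\ref{th:bound} to leading order in $\alpha_h$ since $\ln(1+\alpha_h)\simeq\alpha_h$. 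The stated envelope then follows purely by monotonicity: $\alpha_h R/(p+\alpha_h q)$ decreases in $q$, hence over $0\le q<p$ it is maximal at $q=0$ (upper bound $\alpha_h R/p$) and minimal at $q=p-1$ (lower bound $\alpha_h R/(p+(p-1)\alpha_h)$).

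For the mean value I would observe that the prior-dependent term in~(\ref{eq:General}) carries the prefactor $(1+\alpha_h)^{-n}\Sigma_{np+q}^N\Sigma_0^{-1}=R(R+S_{np+q}\Sigma_0)^{-1}\Sigma_0$, which vanishes like $1/S_{np+q}\sim(1+\alpha_h)^{-n}$; thus ${\bf \mu}_{np+q}^N$ converges exponentially fast to the weighted observation average $S_{np+q}^{-1}\sum_j\prod_{k=0}^{j-1}(1+\alpha_k){\bf y}_j$. Using stationarity (${\bf x}_n={\bf x}_0$) to write ${\bf y}_j={\bf x}_n+\eta_j$ with $\eta_j\sim\mathcal{N}(0,R)$ i.i.d., the residual is $S_{np+q}^{-1}\sum_j\prod_{k=0}^{j-1}(1+\alpha_k)\eta_j$, whose covariance equals $R\,\sum_{j=1}^{np+q}\prod_{k=0}^{j-1}(1+\alpha_k)^2/S_{np+q}^2$ by independence. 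The only genuinely new computation is the squared-weight sum $\sum_{j=1}^{np+q}(1+\alpha_h)^{2\lfloor (j-1)/p\rfloor}$, which by the same block decomposition and $(1+\alpha_h)^2-1=\alpha_h(2+\alpha_h)$ equals $p[(1+\alpha_h)^{2n}-1]/(\alpha_h(2+\alpha_h))+q(1+\alpha_h)^{2n}$; dividing by $S_{np+q}^2$ and taking $n\to\infty$ yields the advertised residual covariance $\tfrac{\alpha_h}{2+\alpha_h}\tfrac{p+\alpha_h(2+\alpha_h)q}{(p+\alpha_h q)^2}R$. Since both $\Sigma_{np+q}^N$ and this residual scale as $\alpha_h R/p$ for large $p$, the resulting RMSE is $\mathcal{O}(\sqrt{\alpha_h/p})$.

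The geometric bookkeeping is routine; the only points demanding care are the block decomposition of the partial sums with its leftover residual of $q$ terms, and the verification that the subleading contributions decay at the claimed exponential rate. The non-commuting character of $R$ and $\Sigma_0$ is harmless here, since every quantity is a rational function of the single matrix $R\Sigma_0^{-1}$ and the scalar manipulations transpose verbatim; I would make this observation explicit once in order to justify writing the limits as in~(\ref{eq:bound_period}).
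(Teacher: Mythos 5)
Your proposal is correct and follows exactly the paper's route: specializing Lemma~\ref{lemma:general} to $\alpha_{np+q}=\alpha_h\delta_{q,0}$, so that $\prod_{j}(1+\alpha_j)=(1+\alpha_h)^n$ and the weights become $(1+\alpha_h)^{\lfloor (j-1)/p\rfloor}$, then evaluating the geometric block sums and passing to the limit. You simply spell out the ``similar calculations'' for the mean and residual covariance that the paper leaves implicit, and your explicit remark on commutativity of the rational functions of $R\Sigma_0^{-1}$ is a welcome extra precaution.
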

\begin{proof}
This can be proven by deriving from Eq.~(\ref{eq:General}) the new expression for the posterior covariance matrix when $\alpha_{np+q} = \alpha_h\delta_{q,0}$:
\be
\Sigma_{np+q}^N = (1+\alpha_h)^nR \left( R + \sum_{j=1}^{np+q} (1+\alpha_h)^{[j-1]_p}\Sigma_0 \right)^{-1}\Sigma_0 \; ,
\ee
where we defined $[j-1]_p$ as the integer part of $(j-1)/p$
. Similar calculations allow obtaining the corresponding expression for the mean value:
\be
\begin{split}
& {\bf \mu}_{np+q}^N = \frac{1}{(1+\alpha_h)^n}\Sigma_{np+q}^N\Sigma_0^{-1}{\bf \mu}_0
\\
& + \left(\mathbb{I} - \frac{1}{(1+\alpha_h)^n}\Sigma_{np+q}^N\Sigma_0^{-1}\right) \frac{\sum_{j=1}^{np+q} (1+\alpha_h)^{[j-1]_p}{\bf y}_j}{\sum_{j=1}^{np+q} (1+\alpha_h)^{[j-1]_p}} \; .
\end{split}
\ee
\qed
\end{proof}
That the RMSE remains lower bounded by a positive constant when resampling is performed periodically is illustrated numerically in the left frame of Fig.~\ref{fig:2}, where the analytical lower bound (\ref{eq:bound_period}) is indicated by the dotted line.
\begin{figure*}
\begin{center}
  \includegraphics[angle=0,width=0.45\linewidth]{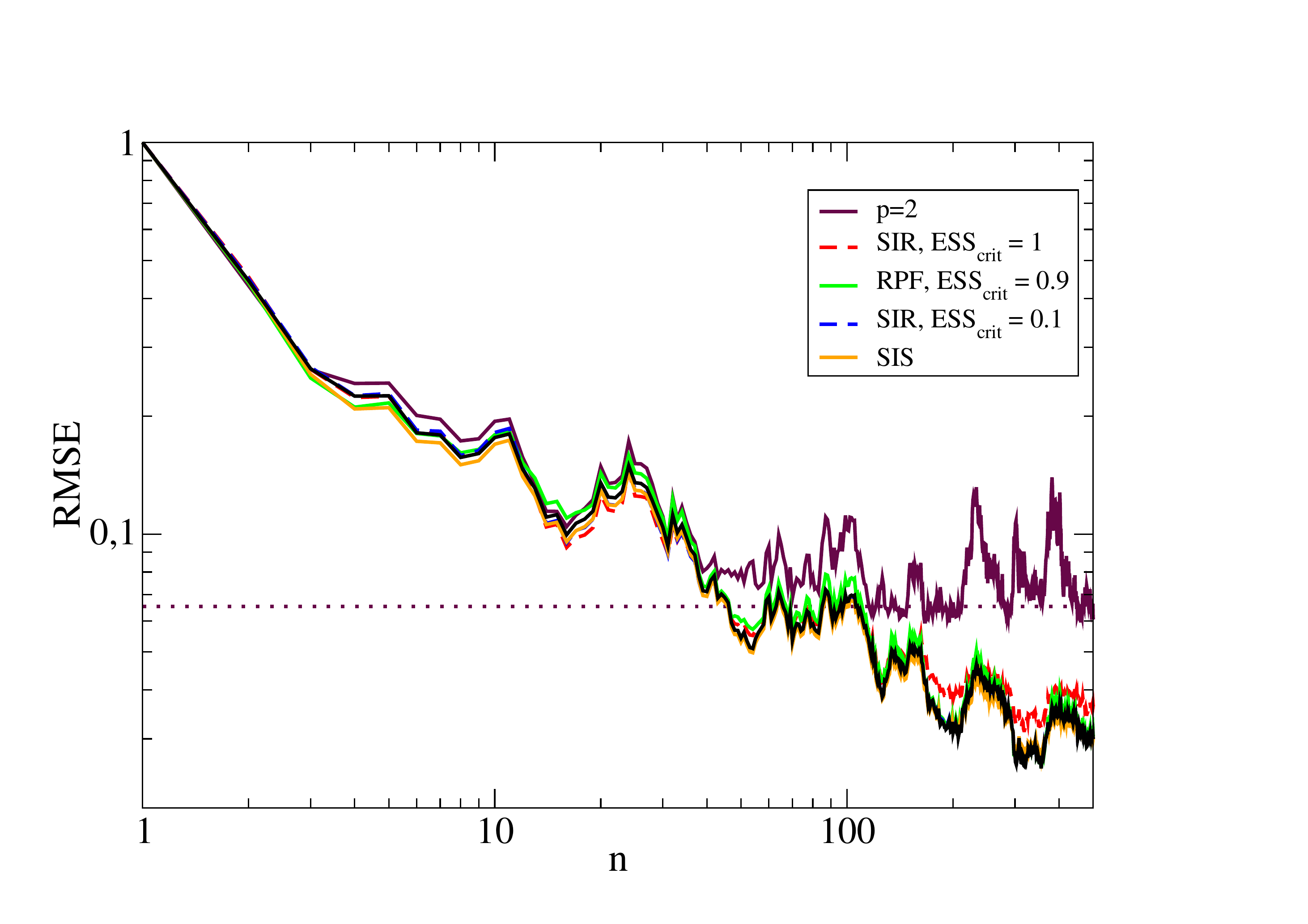}
  \includegraphics[angle=0,width=0.45\linewidth]{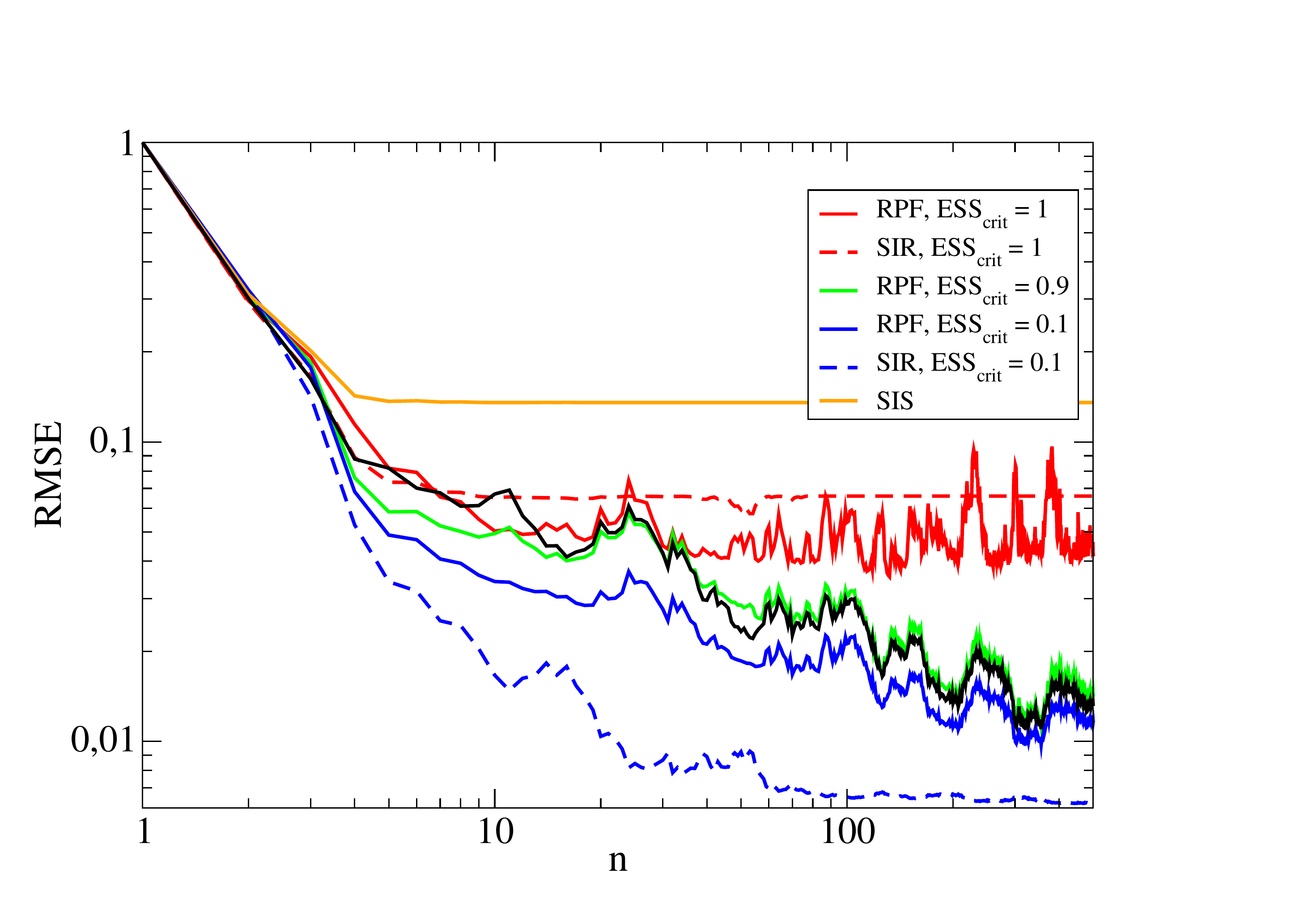}
\caption{Left frame: Comparison of RPF and SIR simulations for different values of the resampling ratio. Sequential importance sampling (SIS) is a special case of the SIR algorithm when no resampling is performed ($\text{ESS}_\text{crit}=0$). The Kalman filter expectation is provided for reference (solid black line). Estimation based on the RPF with $p=2$ periodic resampling is also displayed (solid maroon) and shown to saturate around the analytically predicted solution (dotted line). Right frame: Same but for a prior location $\mu_0$ three standard deviations away from the mean observation. This time, the SIR approximation breaks down, as opposed to the RPF which remains well-behaved.}
\label{fig:2}
\end{center}
\end{figure*} 

\subsection{Kernel bandwidth modulation}

One is thus left with two possibilities in order to allow the estimator to keep up with the Kalman filter prediction: either resample less and less often with time (which seems compatible with the idea of resampling only if the ESS falls lower than a critical threshold, see subsection \ref{sec:Stat}), or choose the bandwidth parameter differently so that consistency of the estimator is maintained despite performing resampling at each time. To begin with, it is clear from theorem \ref{th:bound} that choosing $\{\alpha_n\}_{n\geq1}$ such that $\lim_{n\to\infty}\alpha_n=0$ enforces that $\lim_{n\to\infty}\text{RMSE}(\hat{{\bf x}}_n, {\bf x}_n)=0$ (this can easily be proven by contradiction). This idea was actually hinted at in \cite{ChenPhD}. We proceed one step further by characterizing in theorem \ref{th:sequence} which sequences ensure that the rate of convergence of the RPF estimator is optimal in the sense defined below. 
\begin{definition}[Optimal convergence rate]
\label{def:Optimal}
The rate of convergence of an estimator $\hat{{\bf x}}_n$ towards the true state ${\bf x}_n$ as the number of observations $n\to\infty$ will be called optimal if it is comparable with that given by the inverse of the Fisher information matrix: $\text{RMSE}(\hat{{\bf x}}_n, {\bf x}_n)=\mathcal{O}(n^{-1/2})$.
\end{definition}
Since the Kalman filter estimator is asymptotically efficient, this implies in particular that the optimal rate of convergence is that given by the Kalman filter.
\begin{theorem}
\label{th:sequence}
Let $\{\alpha_n\}_{n\geq1}$ be a sequence of positive numbers as in Eq.~(\ref{eq:General}). If $\alpha_n=\mathcal{O}(n^{-1})$, the RPF estimator converges at an optimal rate. In contrast, given any $\epsilon\in]0, 1/2[$, if $\alpha_n\approx n^{\epsilon-1}$, then $\exists \; c \in \mathbb{R}^+ \; \forall n\geq1 \; \text{RMSE} \geq c n^{(\epsilon-1)/2}$.
\end{theorem}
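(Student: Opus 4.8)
The plan is to reduce the whole statement to the asymptotics of a single scalar ratio. Writing $P_n := \prod_{j=1}^n(1+\alpha_j)$ and $S_n := \sum_{j=1}^n P_{j-1}$ (so $P_0=1$ and the product and sum appearing in Eq.~(\ref{eq:General}) are exactly $P_n$ and $S_n$), Lemma~\ref{lemma:general} gives $\Sigma_n = P_n R(R+S_n\Sigma_0)^{-1}\Sigma_0$. Since $\alpha_j>0$ forces $S_n\to\infty$, for large $n$ one has $\Sigma_n\to\rho_n R$ with $\rho_n := P_n/S_n$ (the two factors of $\Sigma_0$ cancel). For the mean I would use the stationarity $x_n=x_0$, write $y_j=x_0+\eta_j$ with $\eta_j$ i.i.d. of covariance $R$, and read off from Eq.~(\ref{eq:General}) that the prior contribution decays like $R/S_n$ while $\mu_n-x_n$ equals, up to this vanishing term, the weighted noise average $S_n^{-1}\sum_j P_{j-1}\eta_j$. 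Averaging over the observation noise then yields $\mathbb{E}[(\mu_n-x_n)(\mu_n-x_n)^T]\to(\sum_j P_{j-1}^2/S_n^2)R$. The elementary bound $\sum_{j=1}^n P_{j-1}^2\le P_{n-1}\sum_{j=1}^n P_{j-1}=P_{n-1}S_n$ shows $\sum_j P_{j-1}^2/S_n^2\le P_n/S_n=\rho_n$, so that $\mathbb{E}[\|\mu_n-x_n\|^2]\lesssim\rho_n\,\mathrm{Tr}[R]$ as well. Since $\text{RMSE}^2=\mathbb{E}[\|\mu_n-x_n\|^2]+\mathrm{Tr}[\Sigma_n]$, we get $\text{RMSE}^2\asymp\rho_n$, and the theorem becomes a statement about how fast $\rho_n$ decays.

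The natural way to control $\rho_n$ is through its reciprocal, $S_n/P_n=\sum_{j=1}^n\prod_{k=j}^n(1+\alpha_k)^{-1}=\sum_{j=1}^n\exp\!\big(-\!\sum_{k=j}^n\log(1+\alpha_k)\big)$. For the optimal-rate claim I would assume $\alpha_k\le C/k$ and use $\log(1+\alpha_k)\le\alpha_k$ to get $P_n/P_{j-1}=\prod_{k=j}^n(1+\alpha_k)\le\exp\!\big(C\sum_{k=j}^n k^{-1}\big)\lesssim(n/j)^{C}$; summing $P_{j-1}\gtrsim P_n(j/n)^{C}$ over $j$ gives $S_n\gtrsim nP_n$, i.e. $\rho_n=\mathcal{O}(n^{-1})$ and hence $\text{RMSE}=\mathcal{O}(n^{-1/2})$, the optimal rate of Definition~\ref{def:Optimal}. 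This is the easy direction.

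For the suboptimality claim with $\alpha_k\approx k^{\epsilon-1}$, the point is to show $\rho_n\gtrsim\alpha_n\approx n^{\epsilon-1}$, which by $\text{RMSE}^2\ge\mathrm{Tr}[\Sigma_n]\sim\rho_n\,\mathrm{Tr}[R]$ immediately yields the stated bound (adjusting $c$ to absorb the finitely many small $n$ for which the asymptotics have not yet set in; this is legitimate since $\text{RMSE}>0$ for every finite $n$). Here I would bound $S_n/P_n$ from above: using $\log(1+\alpha_k)\ge\tfrac12\alpha_k$ for the eventually small $\alpha_k$ together with $\sum_{k=j}^n\alpha_k\ge c'(n^\epsilon-j^\epsilon)$, one gets $S_n/P_n\le\sum_{j=1}^n\exp\!\big(-\tfrac{c'}{2}(n^\epsilon-j^\epsilon)\big)$. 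Because $P_n\sim\exp(n^\epsilon/\epsilon)$ grows sub-exponentially, this sum concentrates near $j=n$: substituting $j=n-s$ and linearizing $n^\epsilon-(n-s)^\epsilon\approx\epsilon n^{\epsilon-1}s$ turns it into a geometric series $\sum_s e^{-c''n^{\epsilon-1}s}$ of value $\mathcal{O}(n^{1-\epsilon})=\mathcal{O}(1/\alpha_n)$, whence $\rho_n=P_n/S_n\gtrsim\alpha_n$. Equivalently, this is the adiabatic statement that for slowly varying $\alpha_n$ the recursion tracks the instantaneous fixed point $\Sigma_n\approx\alpha_n R$ of Theorem~\ref{th:bound}, valid because the relaxation time $\sim 1/\alpha_n$ is short compared with the scale $\alpha_n/\dot\alpha_n\sim n$ over which $\alpha_n$ varies.

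The main obstacle is precisely this two-sided Laplace-type estimate of $S_n/P_n=\sum_j\prod_{k=j}^n(1+\alpha_k)^{-1}$: the optimal case needs the lower bound guaranteeing at most polynomial growth of $P_n$, while the suboptimal case needs the concentration-near-the-top upper bound, and making the linearization $n^\epsilon-(n-s)^\epsilon\approx\epsilon n^{\epsilon-1}s$ rigorous with constants uniform over the summation range (while controlling the discrepancy between $\log(1+\alpha)$ and $\alpha$) is where the real work lies. Everything else — the reduction to $\rho_n$, the cancellation of $\Sigma_0$, and the mean-error bound $\sum_j P_{j-1}^2\le P_{n-1}S_n$ — is routine.
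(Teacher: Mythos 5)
Your proposal is correct and follows essentially the same route as the paper: both reduce the claim to the asymptotics of the ratio $\prod_{j\leq n}(1+\alpha_j)\big/\sum_{j\leq n}\prod_{k<j}(1+\alpha_k)$, use $\log(1+x)\sim x$ to get polynomial growth (hence the $\mathcal{O}(n^{-1})$ rate) in the first case, and a Laplace-type estimate showing the sum is $\mathcal{O}(n^{1-\epsilon})$ times the product in the second. The only difference is technical: where you linearize $n^\epsilon-(n-s)^\epsilon$ and sum a geometric series, the paper gets the same bound more directly from the exact integral $\int_0^n x^{\epsilon-1}e^{x^\epsilon/\epsilon}\,dx\leq e^{n^\epsilon/\epsilon}$; your explicit lower bound $S_n\gtrsim nP_n$ in the first case is in fact a useful clarification of a step the paper leaves implicit.
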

\begin{proof}
Suppose that $\exists \; c \in \mathbb{R}^+ \; \forall n\geq1 \; n\alpha_n \leq c$. Then, using that $x-x^2/2 < \log{(1+x)} < x$ for $x>0$, one has that $\prod_{j=1}^n(1+\alpha_j)=\mathcal{O}(n^c)$, and thus also that $\sum_{j=1}^n \prod_{j=0}^{k-1}(1+\alpha_j)=\mathcal{O}(n^{c+1})$. Plugging this in Eq.~(\ref{eq:General}) implies that $\Sigma_n^N=\mathcal{O}(R/n)$ and therefore that the RPF estimator converges at an optimal rate. In contrast, if $\alpha_n\approx n^{\epsilon-1}$, then the same reasoning implies that $\prod_{j=1}^n(1+\alpha_j)=\mathcal{O}(\exp{(n^\epsilon/\epsilon)})$. Making use of the following integral inequality
\be
\begin{split}
n^{\epsilon-1}\sum_{j=0}^{n-1}\exp{(\frac{j^\epsilon}{\epsilon})} & \leq \sum_{j=0}^{n-1}(j+1)^{\epsilon-1}\exp{(\frac{j^\epsilon}{\epsilon})}
\\
& \leq \int_0^n dx \; x^{\epsilon-1} \exp{(\frac{x^\epsilon}{\epsilon})} \leq \exp{(\frac{n^\epsilon}{\epsilon})} \; ,
\end{split}
\ee
we deduce that $\exists \; c \in \mathbb{R}^+ \; \forall n\geq1 \; \Sigma_n^N\geq c n^{\epsilon-1}R$. Finally, noting that $\sum_{j=0}^{n-1}\exp{(2j^\epsilon/\epsilon)}\geq \exp{(2(n^\epsilon-1)/\epsilon)}$, we conclude that $\exists \; c \in \mathbb{R}^+ \; \forall n\geq1 \; \text{RMSE}\geq c n^{(\epsilon-1)/2}$.
\qed
\end{proof}
This result still leaves us with a wide range of sequences at our disposal. Finding a suitable statistical criterion both justifying this type of kernel bandwidth modulation (and therefore different from the mean integrated squared error used to derive Eq.~(\ref{eq:bandwidth})) and allowing to identify a sequence which is optimal in some sense remains an open question. Let us look at two examples for the sequence $\{\alpha_n\}_{n\geq 1}$.
\begin{itemize}
\item[$\bullet$] Example 1: $\alpha_n = (n+\alpha_h^{-1})^{-1}$.
\end{itemize}
For this choice, products and sums in Eq.~(\ref{eq:General}) can be computed exactly and we get for the covariance matrix $\Sigma_n^N \underset{n\to\infty}{\approx} 2R/n$, and for the mean value
\be
\label{eq:power}
{\bf \mu}_n^N \underset{n\to\infty}{\approx} \frac{2}{n^2}\sum_{j=1}^n j{\bf y}_j ( 1 + \mathcal{O}(n^{-2}) )
\ee
centered around the observation mean value ${\bf x}_n$ and with residual covariance $\mathbb{E}[({\bf \mu}_n^N-{\bf x}_n)({\bf \mu}_n^N-{\bf x}_n)^T] \underset{n\to\infty}{\approx} 4R/(3n)$. Thus, optimal convergence rate is recovered, as expected from theorem \ref{th:sequence}. Interestingly, as compared to Eq.~(\ref{eq:const}), the dependence of Eq.~(\ref{eq:power}) on the observations is long-ranged: the weighting ratio between the last and the first observations scales linearly with $n$ instead of exponentially as in Eq.~(\ref{eq:const}). This illustrates that consistency of the regularized estimator requires that memory of past observations should be kept long enough. Note that such a requirement, while understandable for parameter estimation, may be irrelevant for tracking applications where memory of the past might be considered relatively useless.
\begin{itemize}
\item[$\bullet$] Example 2: $\alpha_n = \alpha_he^{-n\alpha_h}$.
\end{itemize}
In this case, one cannot compute sums and products in Eq.~(\ref{eq:General}) exactly but one can easily show that $\prod_{j=1}^n(1+\alpha_he^{-j\alpha_h})$ converges to a non-zero limit using that it is a strictly increasing bounded sequence:
\be
1 \leq \prod_{j=1}^n(1+\alpha_he^{-j\alpha_h}) \leq \exp{(\frac{\alpha_h}{e^{\alpha_h}-1})} \; .
\ee
This result allows immediately deducing that $\sum_{j=1}^n\prod_{k=0}^{j-1}(1+\alpha_he^{-k\alpha_h}) = \mathcal{O}(n)$, and thus that $\Sigma_n^N = \mathcal{O}(R/n)$ and that the mean value
\be
{\bf \mu}_n^N \underset{n\to\infty}{\approx} \frac{\sum_{j=1}^n \prod_{k=0}^{j-1}(1+\alpha_he^{-k\alpha_h}){\bf y}_j}{\sum_{j=1}^n \prod_{k=0}^{j-1}(1+\alpha_he^{-k\alpha_h})} \left( 1 + \mathcal{O}(n^{-1}) \right)
\ee
is centered around the observation mean value ${\bf x}_n$ with residual covariance $\mathbb{E}[({\bf \mu}_n^N-{\bf x}_n)({\bf \mu}_n^N-{\bf x}_n)^T] = \mathcal{O}(R/n)$, using that $\prod_{k=0}^{j-1}(1+\alpha_he^{-k\alpha_h})^2$ is also bounded. Optimal convergence rate is thus once more recovered, this time with a weighting ratio between different observations bounded by $\prod_{j=1}^\infty(1+\alpha_he^{-j\alpha_h})$. 

This strategy of modulating the bandwidth by enforcing it to decrease fast enough is positively supported by numerical simulations in the right frame of Fig.~\ref{fig:1}. It thus constitutes a functional method to overcome the limitations in the RPF we discussed earlier. Note that a different strategy discussed in Ref.~\cite{West93} consists in counterbalancing the variance originating from the bandwidth perturbation by rescaling the particle values towards their center of mass. Centering the kernel densities around the shrunken values $a{\bf x}_{n|n-1}^{(i)} + (1-a){\bf \mu}_n^N$, and taking $a=\sqrt{1-\alpha_h}$, one can indeed verify that the posterior density in the RPF asymptotically coincides with that expected from the Kalman filter approach.

\subsection{Strategies based on the resampling ratio}
\label{sec:Stat}

We now examine another possibility, which 
is to perform resampling only if the ESS introduced in Eq.~(\ref{eq:ESS}) becomes lower than a determined threshold which we shall refer to as the resampling ratio $\text{ESS}_\text{crit}$ \cite{DelMoral12}. In the left frame of Fig.~\ref{fig:2}, we compare the posterior RMSE for different values of this threshold, both for the RPF and the SIR algorithm. As one can clearly see, as soon as the resampling ratio is smaller than 1, the RPF yields results in agreement with the exact solution. The SIR algorithm provides comparable results, whatever the value of the resampling ratio. However, if the location $\mu_0$ of the prior density is taken farther away from the correct value, the situation changes since estimating the posterior mean value $\mu_n$ accurately will increasingly depend on the ability of particles to explore phase space well enough to diffuse towards the correct value. This is illustrated in the right frame of Fig.~\ref{fig:2}, where $\mu_0$ was taken to be $3\sqrt{\Sigma_0}$ away from the observed mean value, instead of $\sqrt{\Sigma_0}$ until now. In this case, the SIR algorithm almost always fails to correctly estimate the posterior variance due to sample impoverishment arising as a direct consequence of severe weight degeneracy. In fact, the estimated variance often goes to zero, reflecting the fact that only a single particle eventually survives the resampling process. On the other hand, the RPF proves very robust in this situation and provides correct results as before, despite the fact that its RMSE has a tendency to underestimate the optimal Kalman expectation. Note that the curves displayed on Fig.~\ref{fig:2} are outputs of single particle filter simulations (no averaging is performed).

Resampling conditional on a sub-unit value of $\text{ESS}_\text{crit}$ thus also appears to resolve the RPF shortcomings discussed earlier. One can actually show that the resampling frequency for a sub-unit resampling ratio decreases exponentially fast with the number of observations, thereby justifying that the posterior RMSE should no longer be lower bounded by a positive constant. We provide below a formal proof of this statement in the case of the stationary HMM.
\begin{lemma}
\label{th:ESS}
Assume that $\text{ESS}_{\text{crit}}<1$. Then the spacing between consecutive resampling times increases exponentially fast with the number of observations for a stationary HMM.
\end{lemma}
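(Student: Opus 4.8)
The plan is to exploit the fact that, between two resampling events, the conditionally-resampled filter reduces to pure sequential importance sampling (SIS): in the stationary HMM the particles stay frozen at the positions drawn at the last resampling, and their weights merely accumulate the Gaussian likelihoods. I would therefore index quantities by the resampling event $k$ rather than by the observation $n$, writing $\Sigma_k$ for the spread of the particle cloud immediately after the $k$-th resampling and $m_k$ for the number of observations processed before the $(k+1)$-th resampling is triggered. The statement will then follow from two ingredients: an estimate of how fast the effective sample size decays during an SIS phase, and a recursion showing that $\Sigma_k$ contracts by a fixed factor at every resampling.

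First I would compute the ESS during an SIS phase. Modelling the frozen cloud as $\mathcal{N}(\mu_k,\Sigma_k)$ and using $g(y|x)=\mathcal{N}(y;x,R)$, the unnormalized weight of a particle after $m$ observations is $\propto\exp(-\tfrac{m}{2R}(x-\bar{y}_m)^2)$, with $\bar{y}_m$ the running observation mean. Replacing the empirical weight sums in Eq.~(\ref{eq:ESS}) by Gaussian integrals against the cloud (valid as $N\to\infty$, consistently with the approximation already made in Eq.~(\ref{eq:RPFMC})) gives, with $s:=m\Sigma_k/R$,
\be
\frac{\text{ESS}}{N}\approx\frac{\sqrt{1+2s}}{1+s}\,,
\ee
up to an $\mathcal{O}(1)$ factor accounting for the offset $\bar{y}_m-\mu_k$. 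This is strictly decreasing in $m$, so the condition $\text{ESS}=\text{ESS}_\text{crit}N$ fixes a unique spacing; for $\text{ESS}_\text{crit}<1$ its large-$s$ form yields $m_k\approx 2R/(\Sigma_k\,\text{ESS}_\text{crit}^2)$, i.e. the spacing is inversely proportional to the current cloud spread.

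Next I would close the loop with a recursion on $\Sigma_k$. After $m_k$ SIS steps the weighted covariance is $(\Sigma_k^{-1}+m_k/R)^{-1}$, and the bandwidth perturbation, which now acts \emph{only} at the single following resampling rather than at every step, multiplies it by $(1+\alpha_h)$ exactly as in Eq.~(\ref{eq:RPFSigma}), so $\Sigma_{k+1}=(1+\alpha_h)(\Sigma_k^{-1}+m_k/R)^{-1}$. Substituting $m_k/R=2/(\Sigma_k\,\text{ESS}_\text{crit}^2)$ collapses this to the geometric recursion
\be
\Sigma_{k+1}=\rho\,\Sigma_k\,,\qquad \rho=(1+\alpha_h)\frac{\text{ESS}_\text{crit}^2}{\text{ESS}_\text{crit}^2+2}\,,
\ee
and since $\text{ESS}_\text{crit}<1$ while $\alpha_h\leq1$ one has $\rho<1$. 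Hence $\Sigma_k=\rho^k\Sigma_0\to0$ and $m_k\propto\Sigma_k^{-1}\propto\rho^{-k}$, so the gap between consecutive resampling events grows geometrically in $k$; equivalently the resampling frequency collapses as the estimate sharpens.

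The main obstacle I anticipate is controlling the two approximations feeding the ESS estimate: the large-$N$ replacement of weight sums by integrals, and the large-$s$ (Laplace) reduction that produces the clean $m_k\propto\Sigma_k^{-1}$ law. In particular I would need to argue that the random offset $\bar{y}_m-\mu_k$, which is of order $\sqrt{\Sigma_k}$, only renormalizes $m_k$ by a bounded factor and therefore leaves $\rho$ — and with it the geometric nature of the growth — intact. A secondary point worth making explicit is the bookkeeping between the resampling index $k$ and the observation count $n$: since $n=\sum_{j<k}m_j\sim\rho^{-k}$ is dominated by its last term, it is the geometric blow-up of $m_k$ in $k$ that drives the unbounded growth of the inter-resampling spacing asserted here.
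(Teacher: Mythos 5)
Your proposal is correct and follows essentially the same route as the paper: both compute the large-$N$ Gaussian ESS of the frozen particle cloud under accumulating likelihoods (your $\sqrt{1+2s}/(1+s)$ with $s=m\Sigma_k/R$ is exactly the paper's $\sqrt{\gamma_n(2+\gamma_n)}/(1+\gamma_n)$ under $\gamma_n=1/s$), deduce that the spacing needed to reach $\text{ESS}_\text{crit}$ is proportional to the inverse spread of the cloud (equivalently, to the number of observations already processed), and iterate to obtain geometric growth. The only difference is bookkeeping --- you recurse on $\Sigma_k$ per resampling event, keeping the $(1+\alpha_h)$ inflation explicit, whereas the paper works directly with the ratio condition $m/n>\beta_\text{crit}$ and ignores the bandwidth factor --- and the two yield the same conclusion.
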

\begin{proof}
Our starting point is the observation that
\be
\begin{split}
& \lim_{N\to\infty}\frac{\text{ESS}_n}{N} = \lim_{N\to\infty}\frac{\left(\frac{1}{N}\sum_{i=1}^N \prod_{j=1}^n g_j({\bf y}_j|{\bf x}_{j|j-1}^{(i)}) \right)^2}{\frac{1}{N}\sum_{i=1}^N \prod_{j=1}^n g_j({\bf y}_j|{\bf x}_{j|j-1}^{(i)})^2}
\\
& = \frac{\left(\int_{{\cal X}^{n+1}} \pi_0(d{\bf x}_0) \prod_{j=1}^n g_j({\bf y}_j|{\bf x}_j)f_j(d{\bf x}_j|{\bf x}_{j-1}) \right)^2}{\int_{{\cal X}^{n+1}} \pi_0(d{\bf x}_0) \prod_{j=1}^n g_j({\bf y}_j|{\bf x}_j)^2f_j(d{\bf x}_j|{\bf x}_{j-1})}  \; .
\end{split}
\ee
For simplicity, we consider the case of the one-dimensional stationary state, where integration with respect to the Markov kernels reduces to $\prod_{j=1}^n g_j(y_j|x_0)$. It is then a simple task to compute the effective sample size:
\be
\label{eq:ESSgaussian}
\begin{split}
\lim_{N\to\infty}\frac{\text{ESS}_n}{N} & = \frac{\sqrt{\gamma_n(2 + \gamma_n)}}{1+\gamma_n}\exp{\left(-\frac{\langle \mu_0 - y\rangle_n^2}{\Sigma_0(1+\gamma_n)(2+\gamma_n)}\right)} \\
& \underset{n\to\infty}{\approx} \sqrt{2\gamma_n} \exp{\left(-\frac{\langle \mu_0 - y\rangle_n^2}{2\Sigma_0}\right)} \; .
\end{split}
\ee
We see that $\text{ESS}_n$ depends sensitively on how close the ML $\langle y\rangle_n$ is from the typical particle value $\mu_0$ and on the quantity $\gamma_n := R/(n\Sigma_0)$. The number of observations essentially reduces the effective observation noise variance. If resampling is performed at step $n$, then one can express the ESS at step $n+m$ asymptotically as 
\be
\frac{\text{ESS}_{n+m}}{N} \underset{n, m\to\infty}{\approx} \frac{\sqrt{n(2m+n)}}{m+n} \; . 
\ee
The value of $m$ required for the ESS to fall below a critical threshold $\text{ESS}_\text{crit}N$, with $\text{ESS}_\text{crit} \in [0, 1]$, can then easily be calculated and we get
\be
\frac{\text{ESS}_{n+m}}{N} <  \text{ESS}_\text{crit} \Leftrightarrow \frac{m}{n} > \beta_{\text{crit}}
\ee
with
\be
\beta_{\text{crit}} = \frac{\sqrt{1-\text{ESS}_\text{crit}^2}}{1-\sqrt{1-\text{ESS}_\text{crit}^2}} \; .
\ee
Iterating, we thus see that the number of steps $m_k$ asymptotically required to perform resampling for the $k^{\text{th}}$ time grows exponentially fast: 
\be
\begin{split}
m_k & = \beta_{\text{crit}}(n+\sum_{j=1}^{k-1}m_j) = (1+\beta_{\text{crit}})m_{k-1} 
\\
& = \beta_{\text{crit}}(1+\beta_{\text{crit}})^{k-1}n \; .
\end{split}
\ee
\qed
\end{proof}
Note that similar conclusions were reached in \cite{Chopin02}. We provide numerical simulations supporting this claim in Fig.~\ref{fig:ESS}.
\begin{figure}
\begin{center}
  \includegraphics[angle=0,width=1.1\linewidth]{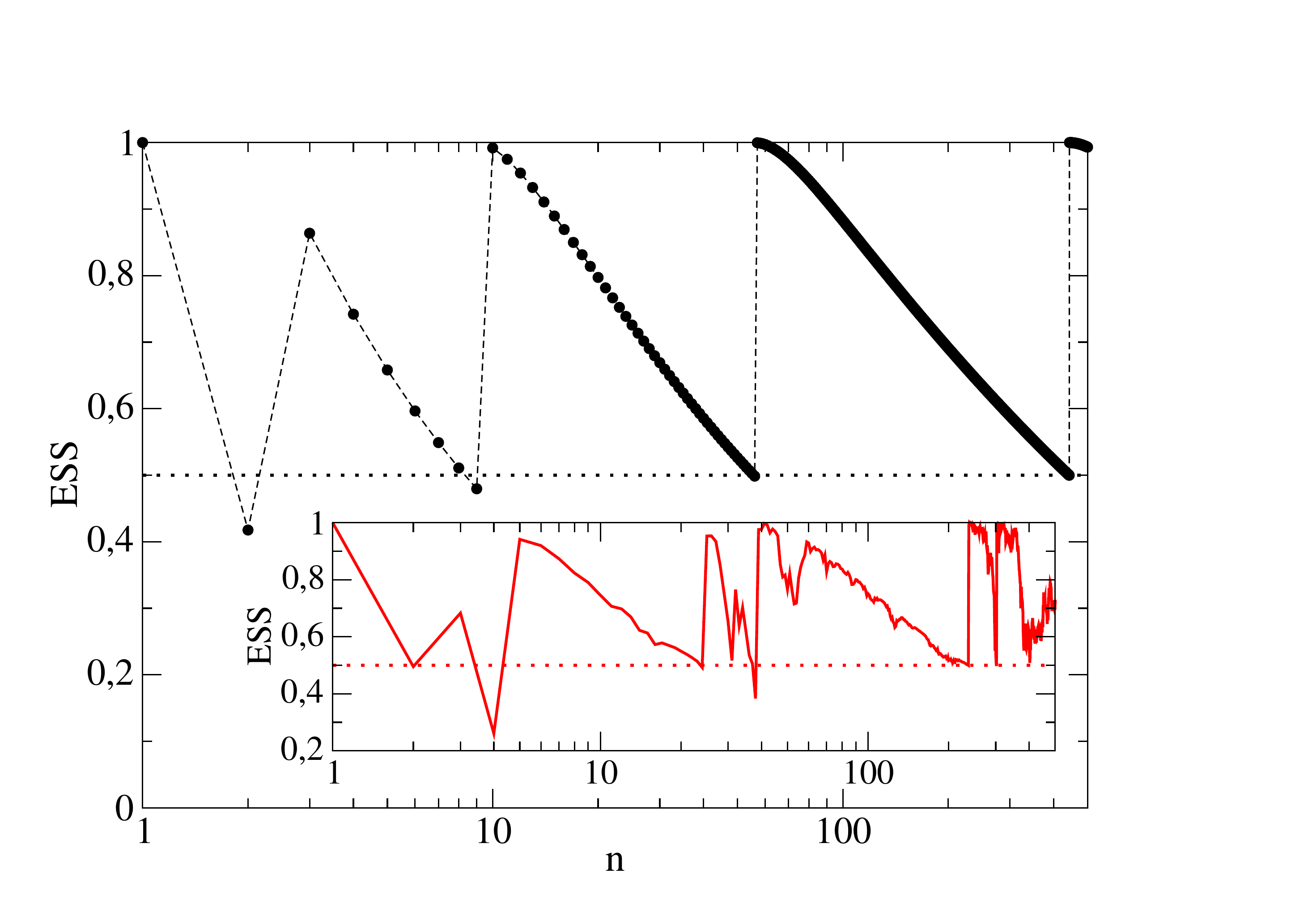}
\caption{Main: Normalized effective sample size (ESS) for $N=1000$ particles as a function of the number of noiseless observations $n$ in logarithmic scale for the stationary HMM. The exponential growth of the characteristic time required for the ESS to fall below a given threshold (here $\text{ESS}_\text{crit}=0.5$ indicated by the dashed line) is manifest. Inset: Same but for noisy observations. This time the exponential growth is somewhat less apparent due to the random fluctuations.}
\label{fig:ESS}
\end{center}
\end{figure} 
Interestingly, we compared the results we obtain for noisy observations (inset) with that for noiseless observations (symbols in main frame). The latter can be interpreted as representing the most favourable scenario, in which observations are always equal to the noiseless signal, and shall therefore be referred to from now on as the oracle limit, by analogy with the influential decision-theoretic concept \cite{Candes06}. As one can notice in the inset, the exponential growth of the resampling step sizes can be occasionally violated if observations cause large fluctuations in the exponential term of Eq.~(\ref{eq:ESSgaussian}). These fluctuations also explain why the $\text{ESS}$ is not necessarily monotonous between consecutive resampling times for noisy observations.

Equipped with lemma \ref{th:ESS}, we can now conclude our demonstration that the RPF estimator is well-behaved as soon as $\text{ESS}_\text{crit}<1$.
\begin{theorem}
Assume that $\text{ESS}_{\text{crit}}<1$. Then the rate of convergence of the RPF estimator is optimal in the sense of definition \ref{def:Optimal}.
\end{theorem}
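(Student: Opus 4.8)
The plan is to recognize that conditional resampling is equivalent, at the level of the covariance recursion of Lemma~\ref{lemma:general}, to a specific choice of the sequence $\{\alpha_n\}_{n\geq1}$, and then to invoke Theorem~\ref{th:sequence}. The \emph{key observation} is that the bandwidth perturbation only enters the dynamics when resampling is actually performed: when the $\text{ESS}$ stays above threshold, particles are merely reweighted rather than regenerated from the kernel mixture, so the covariance obeys the pure Kalman update $\Sigma_n = R(R+\Sigma_{n-1})^{-1}\Sigma_{n-1}$, i.e. $\alpha_n=0$. When resampling is triggered, by contrast, $\alpha_n=\alpha_h$. Thus the effective sequence equals $\alpha_h$ at resampling times and $0$ elsewhere, and the product controlling the asymptotics in Eq.~(\ref{eq:General}) collapses to $\prod_{j=1}^n(1+\alpha_j)=(1+\alpha_h)^{K(n)}$, where $K(n)$ denotes the number of resampling events up to step $n$.

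First I would quantify $K(n)$ using Lemma~\ref{th:ESS}. Since $\text{ESS}_{\text{crit}}<1$, that lemma establishes that the spacing between the $(k-1)$-th and $k$-th resampling grows like $(1+\beta_{\text{crit}})^{k-1}$, so the number of observations accumulated after $k$ resamplings scales as $(1+\beta_{\text{crit}})^k$. Inverting this relation yields $K(n)=\mathcal{O}(\log n)$. Plugging this back gives $\prod_{j=1}^n(1+\alpha_j)=(1+\alpha_h)^{\mathcal{O}(\log n)}=\mathcal{O}(n^c)$ with $c=\log(1+\alpha_h)/\log(1+\beta_{\text{crit}})$, which is precisely the polynomial-growth hypothesis appearing in the first half of Theorem~\ref{th:sequence}.

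From there the conclusion follows essentially verbatim: the sum $\sum_{j=1}^n\prod_{k=0}^{j-1}(1+\alpha_k)=\mathcal{O}(n^{c+1})$, so Eq.~(\ref{eq:General}) forces $\Sigma_n^N=\mathcal{O}(R/n)$, while the mean value converges with a residual covariance of the same order, whence $\text{RMSE}=\mathcal{O}(n^{-1/2})$, which is optimal in the sense of Definition~\ref{def:Optimal}. Note that one does not even need the sharp constant $c$; any polynomial bound on the product suffices.

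The hard part will be the rigorous justification of the identification of $\alpha_n$ with $\alpha_h$ at resampling steps and with $0$ otherwise: one must argue carefully that, between resampling events, the finite-particle approximation does not itself inject spurious variance comparable to the bandwidth term, so that the pure Kalman update is genuinely recovered. This is clean in the $N\to\infty$ limit underlying Eqs.~(\ref{eq:RPFMean},\ref{eq:RPFSigma}), which is also the regime in which Lemma~\ref{th:ESS} is derived, but controlling it uniformly for finite $N$, and beyond the stationary HMM for which Lemma~\ref{th:ESS} is stated, would require more care. Within the asymptotic, stationary setting adopted throughout this section, however, the chain of implications above closes the argument.
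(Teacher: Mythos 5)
Your proposal is correct and follows essentially the same route as the paper: both identify the effective sequence as $\alpha_n=\alpha_h$ at resampling times and $0$ otherwise, use Lemma~\ref{th:ESS} to get $\mathcal{O}(\log n)$ resampling events and hence $\prod_{j=1}^n(1+\alpha_j)=\mathcal{O}(n^{c})$ with $c=\log(1+\alpha_h)/\log(1+\beta_{\text{crit}})$, and then conclude via the argument of Theorem~\ref{th:sequence}. Your closing caveat about justifying the identification of $\alpha_n$ uniformly for finite $N$ is a fair point that the paper itself glosses over.
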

\begin{proof}
According to the previous lemma, we know that the spacing between consecutive resampling times increases exponentially fast, with rate $1+\beta_\text{crit}$. As a consequence,
\be
\prod_{j=1}^n(1+\alpha_j) = \mathcal{O}\left((1+\alpha_h)^{\frac{\log{n}}{\log{(1+\beta_\text{crit})}}}\right) = \mathcal{O}\left(n^{\frac{\log{(1+\alpha_h)}}{\log{(1+\beta_\text{crit})}}}\right) \; ,
\ee
and thus $\exists \; c \in \mathbb{R}^+ \; \forall n\geq1\; \prod_{j=1}^n(1+\alpha_j) = \mathcal{O}(n^c)$. The same reasoning as that followed in the proof of theorem \ref{th:sequence} allows us to conclude.
\qed
\end{proof}

\section{Extension to non-linear models}
\label{sec:NL}

We now provide numerical evidence that the above results qualitatively hold in the case of non-linear models. We shall consider two examples: the logistic map, which is notably famous for possessing regions of parameter space displaying chaos, and a multi-dimensional plant growth model.

\begin{figure}
\begin{center}
  \includegraphics[angle=0,width=1.1\linewidth]{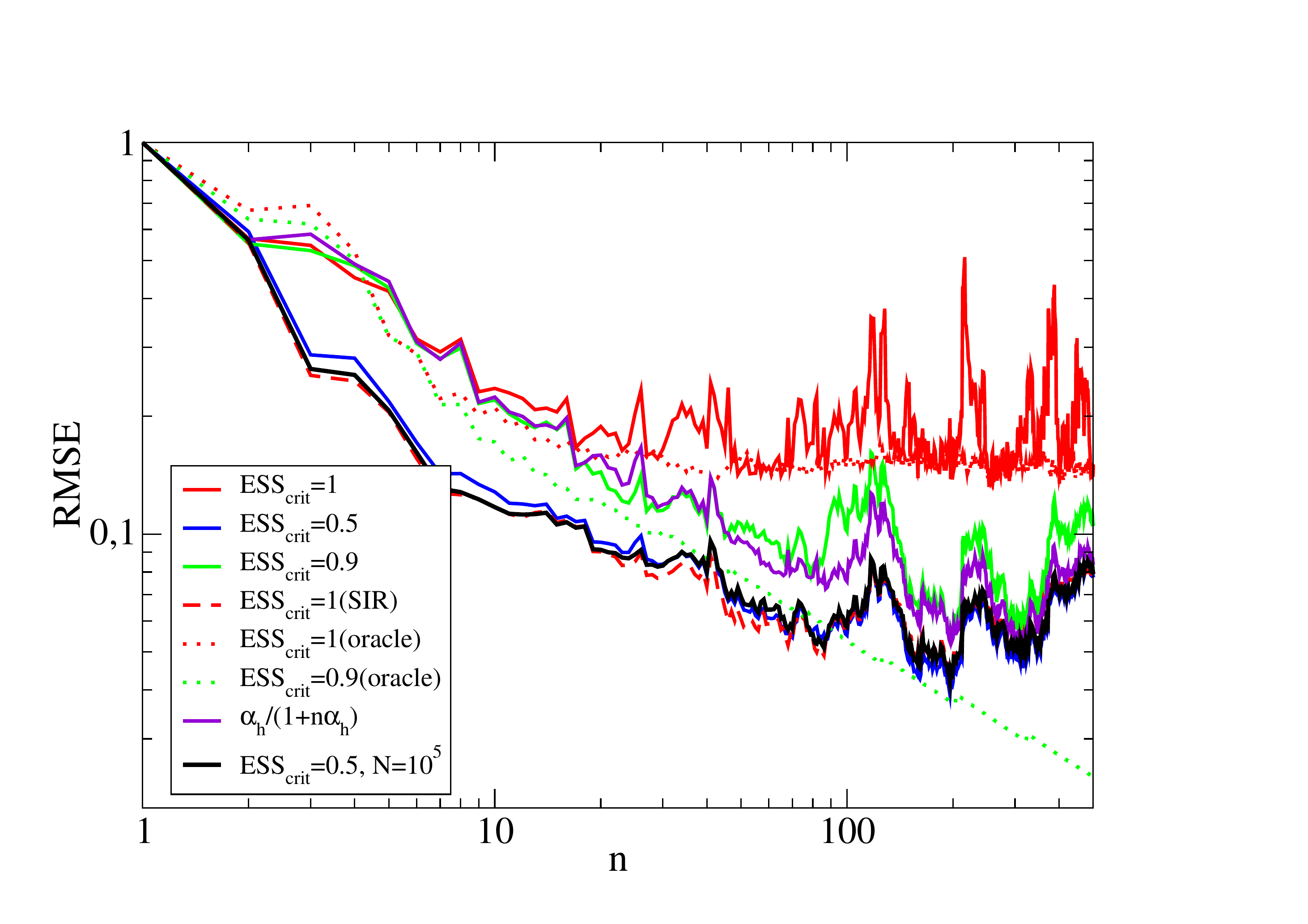}
\caption{Normalized root mean squared error of the estimated parameter $a$ in the logistic map as a function of the number of observations in logarithmic scale for $N=1000$ particles and normalized observation noise variance $R/\Sigma_0=0.33$. The saturation in the RMSE (solid red) testifies that the effect predicted for linear models remains valid in non-linear cases. Other strategies comparably improve the estimation.}
\label{fig:logistic}
\end{center}
\end{figure}

\subsection{Logistic map} 

\begin{figure*}
\begin{center}
  \includegraphics[angle=0,width=0.45\linewidth]{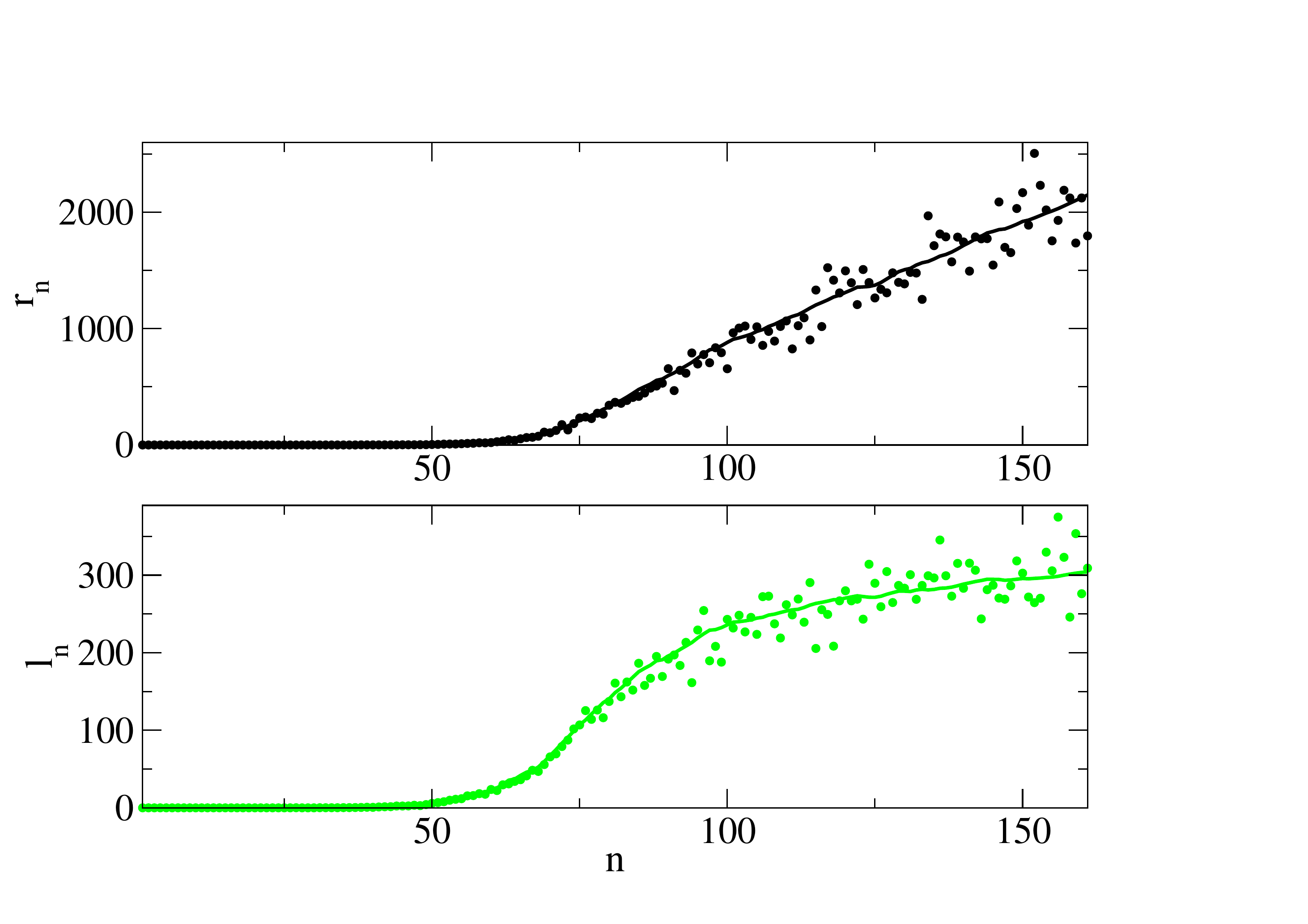}
  \includegraphics[angle=0,width=0.45\linewidth]{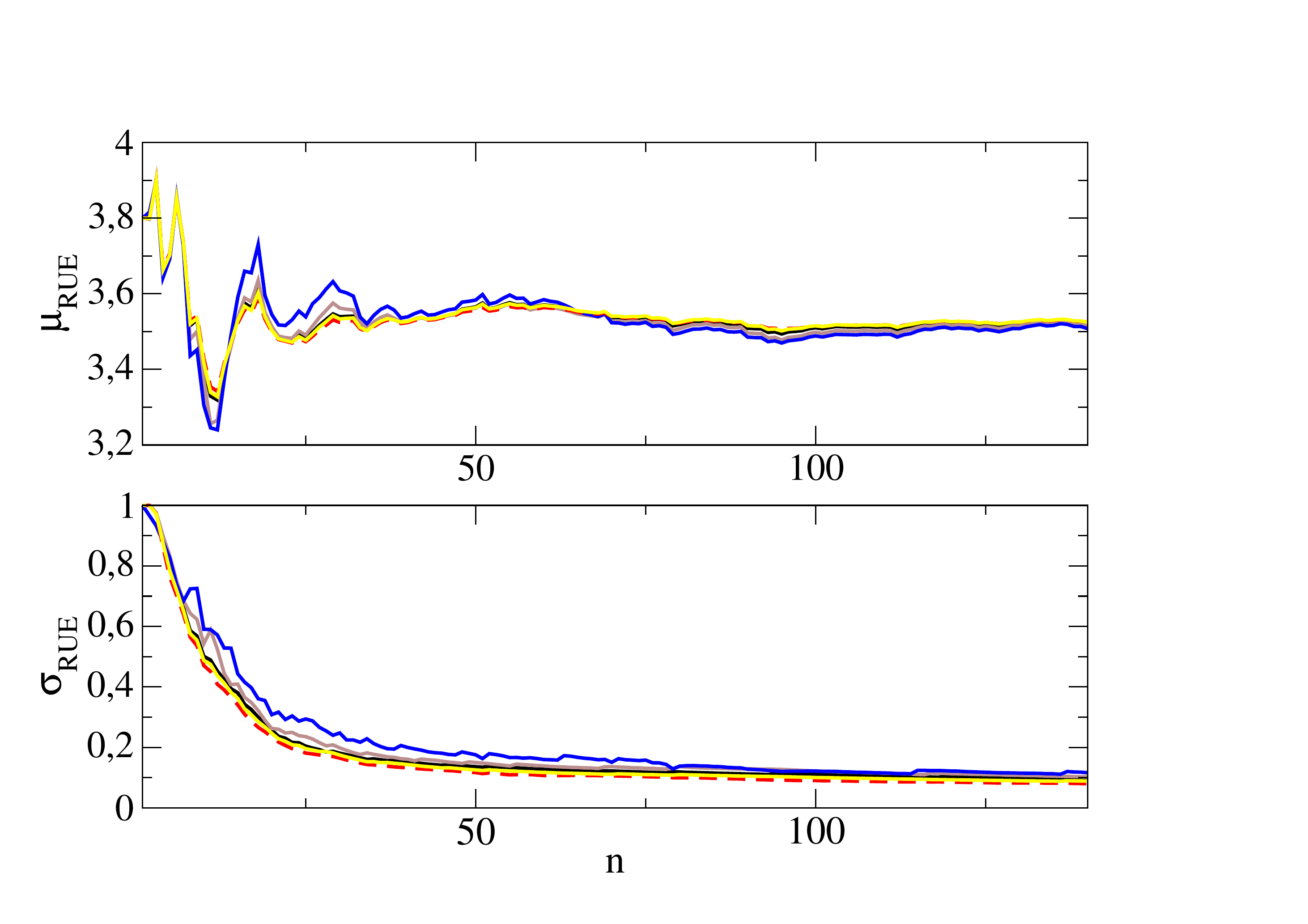}
\caption{Upper left: Observations of sugarbeet root mass $r_n$ (noisy = symbols, noiseless = thick line). Lower left: Likewise for leaf mass $l_n$. Upper right: Evolution of the mean value of the estimated parameter $\text{RUE}$ in the LNAS model after each new observation at time $n$, for $\text{ESS}_\text{crit}=0.5$ and various particle numbers: $N=1000$ (blue), $N=10^4$ (brown), $N=10^5$ (black) and $N=10^6$ (yellow). Convergence of the particle filter estimation towards the true posterior mean value is apparent, as a comparison with a SIR simulation testifies ($N=10^6$, dashed red). Lower right: Likewise for the normalized standard deviation.}
\label{fig:lnas}
\end{center}
\end{figure*}

The logistic map, popularized by Robert May \cite{May76}, is a discretized version of the logistic equation first introduced by Verhulst in the 19$^{\text{th}}$ century to describe population dynamics with a finite carrying capacity. It is described by the deceptively simple equation $x_{n+1} = ax_n(1-x_n)$, with ${\cal X}=[0,1]$ for $a\in[0,4]$. The nature of the state dynamics can change drastically depending on the value of $a$: it converges to a stationary solution for $0\leq a\leq 3$, oscillates between multiple values for $3<a\leq a_c$, and displays chaotic behavior for $a>a_c$, where $a_c\simeq 3.57$, with islands of stability found amid. The effect of the initial condition, on the other hand, is often of litte relevance and shall be assumed known to us: $x_0=0.5$. In the following, we seek to estimate $a$ given multiplicative noisy observations $y_n = x_n\eta$, with $\eta \sim \log{\cal N}(1, R)$. Starting from a prior distribution $a^{(i)} \sim {\cal N}(3.0, 0.3)$, the resulting inference of parameter $a$ given observations generated with the value $a^*=3.33$ and $R=0.1$ is displayed in Fig.~\ref{fig:logistic}.
Saturation of $\text{RMSE}(\hat{a}_n, a^*)$ when resampling is performed at each step is clearly apparent (solid red). This effect disappears using either of the resampling strategies we discussed earlier, however, as the comparison with a benchmark simulation for $N=10^5$ particles (solid black) testifies. Interestingly, we compared the results we obtain for noisy observations (thick lines) with that in the oracle limit (dotted lines). The RMSE obtained in the latter case acts as a lower bound with respect to that obtained for noisy observations, with numerical simulations even indicating that $\lim_{n\to\infty}\text{RMSE}(\hat{a}_n,a^*)=0$ in this case for $\text{ESS}_\text{crit}<1$.

If instead we choose to work in the chaotic parameter region and take $a^*>a_c$, our numerical results (not shown) seemingly point to the fact that correctly inferring the value of $a^*$ is always challenging, even in the oracle limit. The same can be told for the estimation of the initial condition $x_0$, for a fixed known value of $a$. The reason for this behavior seems to stem from the fact that, in such situations, the information provided by observations on the quantity of interest is a (rapidly) decreasing function of time. This loss of memory on the initial condition ultimately limits the resolution with which inference can be carried, as concentration results recently obtained in a similar context may testify \cite{Paulin16}.

\subsection{LNAS model}

Next, we look at a more elaborate model, which describes the growth of leaf mass $l_n$ and root mass $r_n$ of a sugarbeet plant as a function of daily temperature $T_n$ and radiation $\phi_n$. The latter constitute environmental variables, while the former are the hidden states. This model is a simplified version of the LNAS model \cite{Cournede13}. In particular, we consider deterministic transition functions. Leaf and root masses are updated as
\be
\left( \begin{array}{l} l_n \\ r_n \end{array} \right) = \left( \begin{array}{l} l_{n-1} \\ r_{n-1} \end{array} \right) + \left( \begin{array}{l} a_n \\ 1-a_n \end{array} \right)q_n \; ,
\ee
where $q_n$ is the daily produced biomass and $a_n$ is an allocation function. Biomass production is governed by the Beer-Lambert law
\be
\label{eq:BL}
q_n = \text{RUE} \; \phi_n \left( 1 - e^{-\rho^{-1} l_n} \right) \; ,
\ee
where $\text{RUE}$ is the radiation use efficiency and $\rho^{-1}$ the leaf mass density. Allocation depends on the thermal time $\tau_n$ as
\be
a_n = \frac{\gamma}{2}\left(1 - \text{erf}[\frac{1}{\sqrt{2}\sigma_a}\log{(\frac{\tau_n}{\mu_a})}]\right) \; ,
\ee
where $\gamma$ controls early biomass partition, $\text{erf}$ is the error function and $(\mu_a, \sigma_a)$ are parameters governing the shape of the allocation function. Thermal time increases monotonically as a function of temperature as $\tau_n = \tau_{n-1} + \text{max}[0, T_n]$. The system is initialized as $\tau_0=r_0=0$, and $l_0\neq 0$ such that $\rho^{-1} l_0 \ll 1$. For environmental variables, we use real data obtained from a weather station in France. The filtering problem in the present case then consists in estimating the parameters ($\text{RUE}$, $\gamma$, $\mu_a$), given independent prior densities $\text{RUE}^{(i)} \sim {\cal N}(3.8, 0.3)$, $\gamma^{(i)} \sim {\cal N}(0.7, 0.05)$ and $\mu_a^{(i)} \sim {\cal N}(500, 50)$, and given multiplicative noisy observations of the hidden states $l_n\eta$ and $r_n\eta$ with $\eta \sim \log{\cal N}(1,R)$. We shall assume $\rho$ and $\sigma_a$ known beforehand. 
\begin{figure*}
\begin{center}
  \includegraphics[angle=0,width=0.33\linewidth]{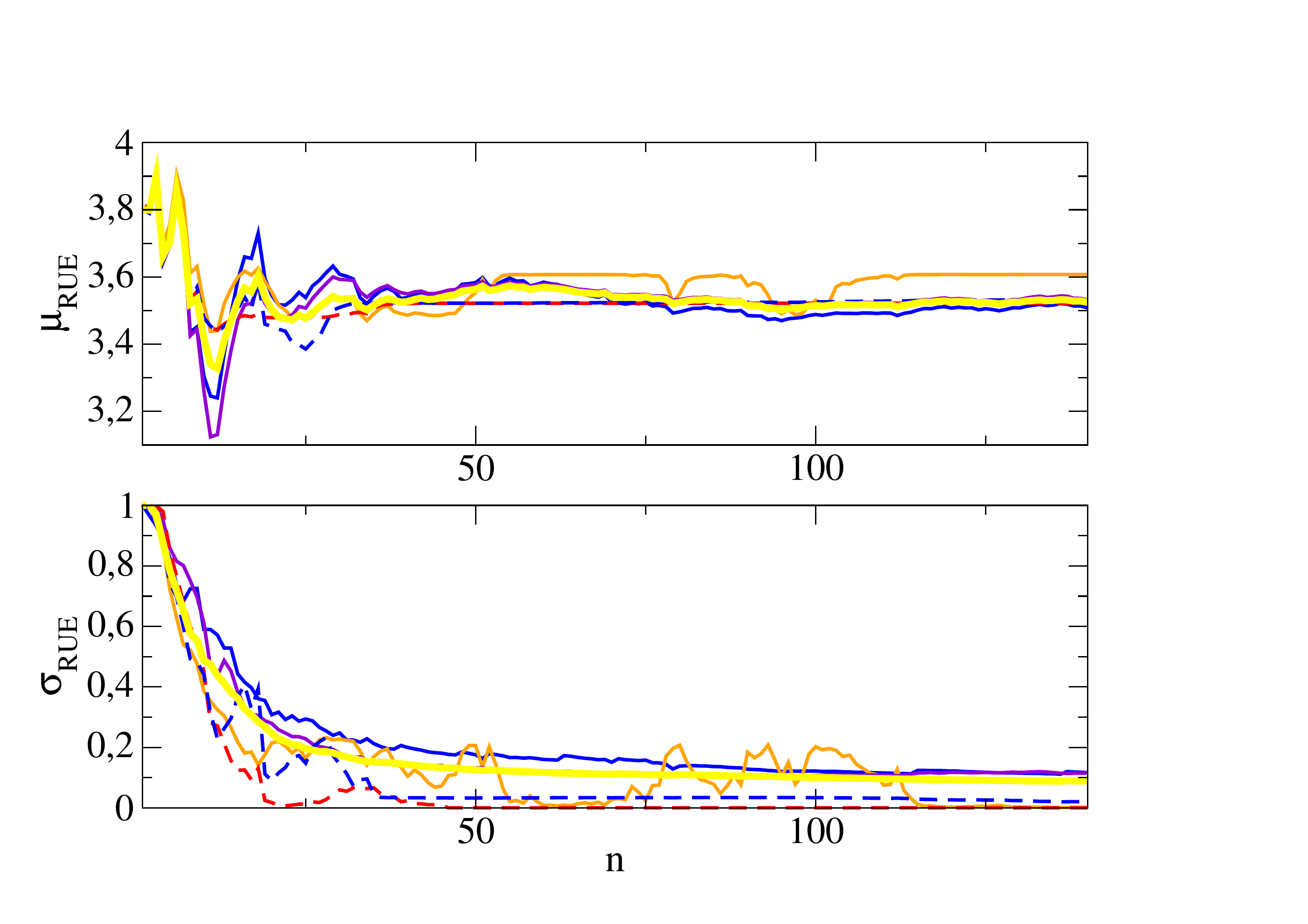}
  \includegraphics[angle=0,width=0.33\linewidth]{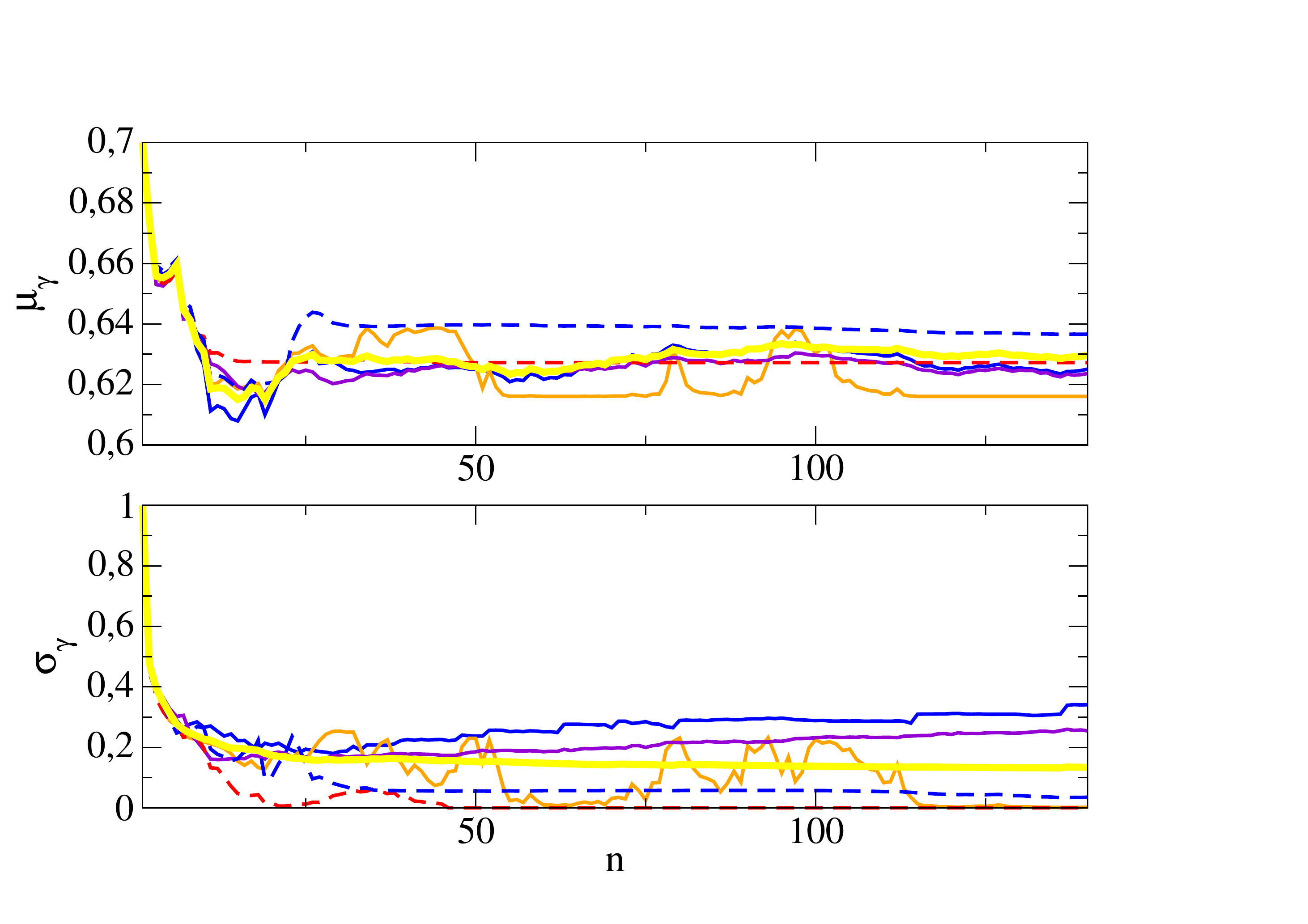}
  \includegraphics[angle=0,width=0.33\linewidth]{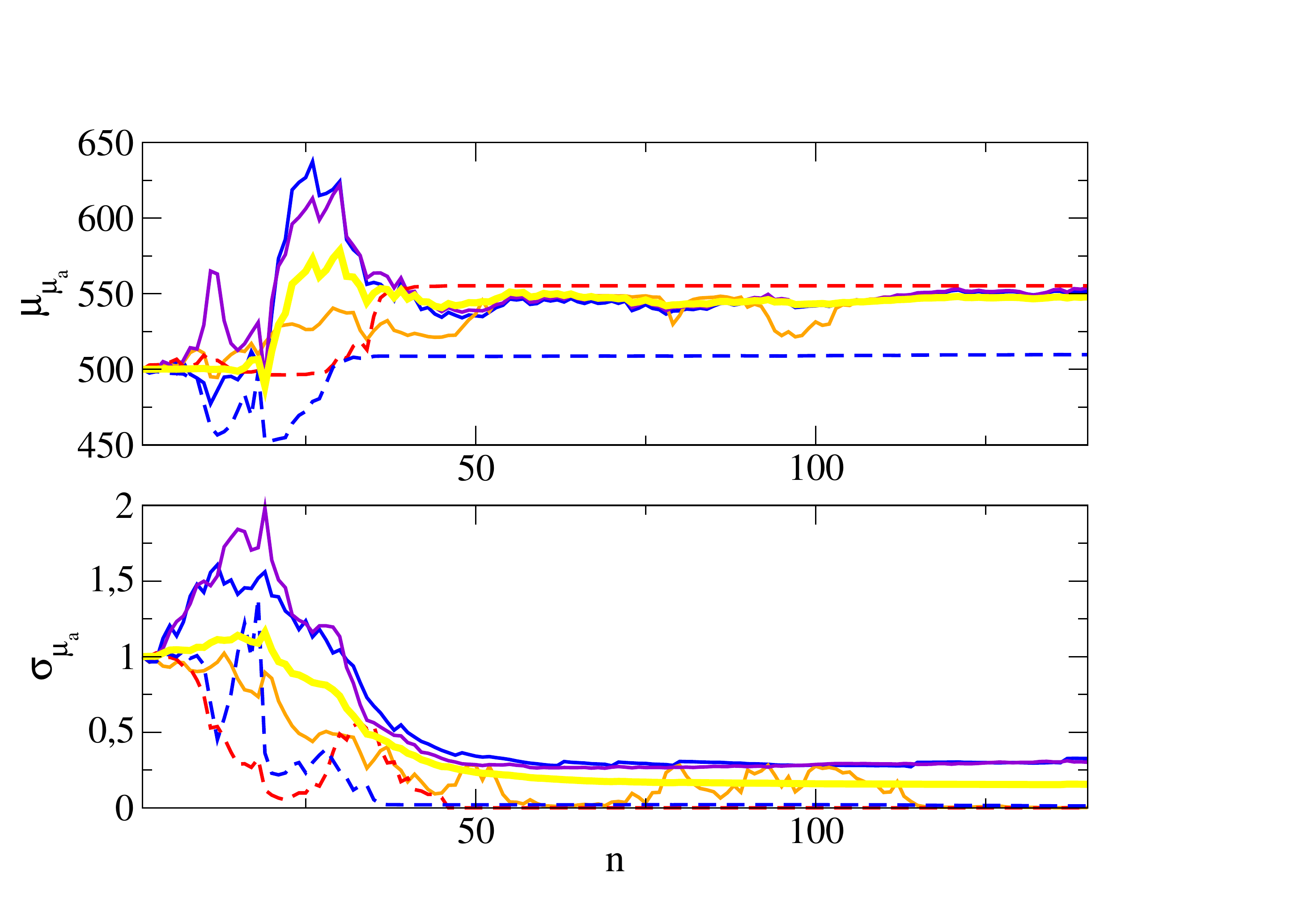}
\caption{Upper left: Evolution of the mean value of the estimated parameter $\text{RUE}$ in the LNAS model after each new observation at time $n$, for $N=1000$ particles and different resampling strategies: ESS$_\text{crit}=0.5$ (blue), $\alpha_n=\alpha_h/(1+n\alpha_h)$ (violet), ESS$_\text{crit}=0$ (orange). Lower left: Likewise for the normalized standard deviation. The bootstrap particle filter expectations (ESS$_\text{crit}=1.0$ (dashed red), ESS$_\text{crit}=0.5$ (dashed blue)) are clearly flawed with respect to the benchmark limit ($N=10^6$ particles, solid yellow), while the RPF provides excellent results already for $N=1000$ particles. Middle: Likewise for $\gamma$. Right: Likewise for $\mu_a$.}
\label{fig:lnas2}
\end{center}
\end{figure*}

Because we no longer have access to the optimal posterior density, as opposed to the case of the stationary model, we plot the evolution of posterior standard deviations and mean values of each parameter (instead of the RMSE) and use results obtained by the RPF for $N=10^6$ particles and ESS$_\text{crit}=0.5$ as a benchmark. We checked that this corresponds to the optimal posterior density by plotting the results obtained for $N=10^6$ particles using the SIR algorithm by comparison, and indeed observed that convergence as a function of $N$ is achieved (see Fig.~\ref{fig:lnas}). These simulations were carried by running our algorithms on the computing mesocenter Fusion. Given observations generated with values $(\text{RUE}^*,\gamma^*,\mu_a^*)=(3.56,0.625,550)$ and $R=0.1$, the resulting inference for various numbers of particles and resampling strategies is displayed in Fig.~\ref{fig:lnas2}. We see that the performance of the RPF for $N=1000$ particles and ESS$_\text{crit}=0.5$ is already close to that obtained in the benchmark scenario. Using our kernel bandwidth modulation strategy, we additionally demonstrate that equivalent results can be obtained even when ESS$_\text{crit}=1$, whereas the traditional choice of Eq.~(\ref{eq:bandwidth}) leads to an unstable exploration of parameter space (not shown). 
This holds independently of the chosen parameter. In contrast, expectations based on the bootstrap particle filter or in the absence of any resampling are evidently unable to capture the correct behavior for the posterior variance. As a consequence, these particle filters will fail to provide appropriate credibility regions for the parameters and thereby make any reliable prediction impossible. This is also illustrated in table \ref{tab:1} where we compared the performances of our various resampling strategies, each time averaged over 10 independent runs. These results thus clearly demonstrate the ability of the RPF to outperform such traditional particle filters for a moderate number of particles.


\section{Conclusions}
\label{sec:Concl}

To conclude, we demonstrated that the choice of resampling strategy in the regularized particle filter can be paramount, as far as estimation of the posterior probability density is concerned. In particular, using analytical arguments inspired from linear Gaussian state-space models, we proved that resampling on a systematic basis causes the RMSE to be lower bounded by a constant which depends on the number of particles and on the dimensionality of the state space. Numerical evidence supporting our claims, including for non-linear state-space models, was also highlighted. Finally, we showed that if one introduces a resampling ratio, or if the kernel bandwidth is appropriately modulated, then the regularized particle filter clearly outperforms traditional bootstrap particle filters.

As a closing remark, we note that the idea we highlighted in this paper of progressively reducing the bandwidth perturbation seems related to recent approaches known as tempering \cite{Svensson17}, which consist in building a probability measure which depends on a tunable parameter allowing one to smoothly transport particles from the prior $\pi_0$ to the posterior $\pi_n$. Similar ideas have also been invoked to perform Bayesian update progressively \cite{Daum08}, and investigating the ramifications of such connections in more detail would certainly be worthy of interest. We also leave for future research the question of which statistical criterion (Kullback-Leibler divergence, Wasserstein metric, etc.) could justify this type of kernel bandwidth modulation.

\begin{table*}
\caption{Comparison of particle filter predictions for different resampling strategies at $n=100$ averaged over 10 runs.}
\label{tab:1}
\begin{tabular}{|l|l|l|l|l|l|}
\hline
 & Benchmark & RPF (ESS$_\text{crit}$ = 0.5) & RPF ($\alpha_n = (n+\alpha_h^{-1})^{-1}$) & SIR (ESS$_\text{crit}$ = 0.5) & SIS (ESS$_\text{crit}$ = 0) \\
\hline
$\mu_\text{RUE}$ & 3.521 & 3.479 (0.0096) & 3.505 (0.0291) & 3.554 (0.0431) & 3.503 (0.0569) \\
\hline
$\sigma_\text{RUE}$ & 0.0325 & 0.0487 (0.0041) & 0.0341 (0.0083) & 0.0126 (0.0161) & 0.0060 (0.0049) \\
\hline
$\mu_\gamma$ & 0.632 & 0.639 (0.0026) & 0.636 (0.0090) & 0.623 (0.0095) & 0.634 (0.0089) \\
\hline
$\sigma_\gamma$ & 0.0070 & 0.0146 (0.0016) & 0.0074 (0.0032) & 0.0024 (0.0030) & 0.0007 (0.0005) \\
\hline
$\mu_{\mu_a}$ & 542.6 & 537.1 (2.37) & 540.4 (7.55) & 547.2 (18.08) & 540.4 (10.50) \\
\hline
$\sigma_{\mu_a}$ & 8.297 & 14.43 (1.69) & 9.996 (1.74) & 4.764 (5.89) & 2.477 (3.09) \\
\hline
\end{tabular}
\end{table*}

\begin{acknowledgements}
The authors wish to thank Gautier Viaud for useful conversations at various stages of this work.
\end{acknowledgements}

\bibliography{RPF}
\bibliographystyle{spmpsci}

\appendix

\section{Optimal bandwidth selection}
\label{app:band}

There are various ways \cite{Sheather04} of choosing the bandwidth parameter $h_n$ in Eq.~(\ref{eq:RPF}). The most popular choice relies on minimizing the mean integrated squared error (MISE) between the true posterior probability density function $p$ and the kernel density $p_h^N(x)=\sum_{i=1}^N {\cal K}_h(x-x^{(i)})/N$, with $x^{(i)} \sim p(x)$:
\be
\label{eq:MISE}
\begin{split}
\text{MISE}(h) & = \mathbb{E}\left[ ||p_h^N - p||_{L^2}^2 \right] 
\\
& = ||\mathbb{E}[p_h^N] - p||_{L^2}^2 + \int dz \; \mathbb{V}[p_h^N(z)] \; ,
\end{split}
\ee
where we introduced the $L^2(\mathbb{R})$ norm $p \mapsto ||p||_{L^2}=\sqrt{\int dz \; p(z)^2}$. As explicited above, the MISE can allegedly be decomposed into the integrated squared bias and the integrated variance. Observing that $\mathbb{E}[p_h^N(x)] = \int dz \; p(z) {\cal K}_h(x-z)$ and developing $p$ around $x$ as a Taylor series, one obtains:
\be
\mathbb{E}[p_h^N(x)] - p(x) = \frac{h^2}{2}p''(x)\int dz \; z^2 {\cal K}_1(z) + o(h^2) \; .
\ee
Likewise, the variance can be computed by decomposing $\mathbb{E}[p_h^N(x)^2]$ into diagonal and non-diagonal contributions:
\be
\begin{split}
\mathbb{E}[p_h^N(x)^2] & = \frac{1}{N} \int dz \; p(z) {\cal K}_h(x-z)^2 
\\
& + \frac{N(N-1)}{N^2}\left(\int dz \; p(z){\cal K}_h(x-z)\right)^2 
\end{split}
\ee
such that $\mathbb{V}[p_h^N(x)] = (||{\cal K}_1||_{L^2}^2 p(x) + o(1))/(Nh)$. Injecting these results in Eq.~(\ref{eq:MISE}) and minimizing with respect to $h$ is straightforward and provides the optimal choice:
\be
h_{\text{MISE}}^5 = \frac{||{\cal K}_1||_{L^2}^2}{N||p''||_{L^2}^2\left(\int dz \; z^2 {\cal K}_1(z)\right)^2} \; .
\ee
Note that this implies that $\text{MISE} \propto N^{-4/5}$. The only problem left is that this choice depends on the second moment of the underlying true probability density function which is generally unknown. There are different strategies to estimate this quantity, which we will not delve into \cite{Sheather04}. If we assume that both $p$ and ${\cal K}$ are Gaussian distributed, the expression for $h$ reduces to the simple $h_{\text{MISE}}^5 = 4\Sigma_p^{5/2}/(3N)$ which coincides with Eq.~(\ref{eq:bandwidth}) for $d_x=1$.

\section{Kalman filter equations}
\label{app:KF}

The Kalman filter equations (\ref{eq:Kalman}) can be derived recursively in two steps, starting from the probability density function at time $n-1$. The first step is the prediction step, in which we seek to determine $p(\hat{{\bf x}}_n | {\bf y}_{1:n-1})$, in other words how the probability density function is propagated between two time steps. Using Eq.~(\ref{eq:linHMM}), one can easily show that 
\be
\label{eq:predictx}
{\bf \mu}_{n|n-1} := {\mathbb E}[\hat{{\bf x}}_n | {\bf y}_{1:n-1}] = A_n {\bf \mu}_{n-1}
\ee
using that state noise has null expectancy, and
\be
\label{eq:predictSigma}
\begin{split}
\Sigma_{n|n-1} & := {\mathbb E}[(\hat{{\bf x}}_n - {\bf \mu}_{n|n-1})(\hat{{\bf x}}_n - {\bf \mu}_{n|n-1})^T] 
\\
& = A_n \Sigma_{n-1} A_n^T + Q_n
\end{split}
\ee
using that state noise is uncorrelated with the dynamics. The second step is the update step, in which we seek to compute the posterior moments ${\bf \mu}_n$ and $\Sigma_n$, expressions for which are traditionally obtained by enforcing unbiasedness of the estimator and minimizing the mean squared error, as we will now show. We begin by writing the posterior mean as a linear combination between the prior mean and the observation, that is ${\bf \mu}_n = K'_n {\bf \mu}_{n-1} + K_n {\bf y}_n$. Enforcing ${\mathbb E}[\hat{{\bf x}}_n - {\bf \mu}_n | {\bf y}_{1:n-1}]=0$ leads to $K'_n = (\mathbb{I} - K_nB_n)A_n$, and thus:
\be
\label{eq:postx}
{\bf \mu}_n = (\mathbb{I} - K_nB_n){\bf \mu}_{n|n-1} + K_n{\bf y}_n
\ee
which expresses the posterior mean as a linear combination between the predicted mean and the observation, with a tradeoff controlled by the so-called Kalman gain matrix $K_n$, an expression for which will be derived below. We begin by computing $\Sigma_n$, which follows directly from the previous equation:
\be
\label{eq:generalSigma}
\begin{split}
\Sigma_n & = \text{Cov}[(\mathbb{I} - K_nB_n)(\hat{{\bf x}}_n - {\bf \mu}_{n|n-1}) + K_n\eta_n] 
\\
& = (\mathbb{I} - K_nB_n)\Sigma_{n|n-1}(\mathbb{I} - K_nB_n)^T + K_nR_nK_n^T \; .
\end{split}
\ee
Here we defined $\text{Cov}[\hat{{\bf x}}] = \mathbb{E}[\hat{{\bf x}}\hat{{\bf x}}^T]$ as a shorthand and used the fact that observation noise is uncorrelated with the dynamics. The Kalman gain is then obtained by minimizing the mean squared error, which corresponds to the trace of the posterior covariance matrix: 
\be
\label{eq:MMSE}
\begin{split}
0 & = \frac{\partial \text{Tr}[\Sigma_n]}{\partial K_n} 
\\ 
& = -2\Sigma_{n|n-1}B_n^T + 2K_nB_n\Sigma_{n|n-1}B_n^T + 2K_nR_n \; ,
\end{split}
\ee
where derivation with respect to a matrix is performed element-wise. Solving Eq.~(\ref{eq:MMSE}) yields the Kalman gain expression
\be
\label{eq:Kgain}
K_n = \Sigma_{n|n-1}B_n^T(B_n\Sigma_{n|n-1}B_n^T + R_n)^{-1} \; ,
\ee
where matrix inversion is justified by positive definiteness. Injecting this in Eq.~(\ref{eq:generalSigma}), we obtain a much simpler expression for the posterior covariance matrix:
\be
\label{eq:Sigma}
\Sigma_n = (\mathbb{I} - K_nB_n)\Sigma_{n|n-1} \; .
\ee
From Eqs.~(\ref{eq:Kgain},\ref{eq:Sigma}), we also derive the useful and synthetic relation $\Sigma_nB_n^T=K_nR_n$. Finally, using this and rearranging Eqs.~(\ref{eq:postx},\ref{eq:Sigma}), we obtain the central result Eq.~(\ref{eq:Kalman}). Notice that we made no real use of the assumption of Gaussian distributions. In fact, the above results hold true irrespective of the underlying probability density function, provided noises are centered. The distinctive feature of the Gaussian case is that the knowledge of the first two moments allows recovering the entire distribution. Specifically, for Gaussian probability density functions, the Kalman filter equations can actually also be derived without any particular assumptions using Bayes rule.

\end{document}